\providecommand{\U}[1]{\protect\rule{.1in}{.1in}}
\newtheorem{theorem}{Theorem}
\newtheorem{corollary}[theorem]{Corollary}
\newtheorem{definition}[theorem]{Definition}
\newtheorem{lemma}[theorem]{Lemma}
\newtheorem{remark}[theorem]{Remark}
\begin{document}

\title{A smallest computable entanglement monotone}

\author{\IEEEauthorblockN{Jens Eisert\IEEEauthorrefmark{1}
and Mark M.~Wilde\IEEEauthorrefmark{2}}
\IEEEauthorblockA{\IEEEauthorrefmark{1}%
Dahlem Center for Complex Quantum Systems,
Freie Universit\"at Berlin,
Berlin, Germany,\\
Helmholtz-Zentrum Berlin f{\"u}r Materialien und Energie, 
14109 Berlin, Germany,\\
Fraunhofer Heinrich Hertz Institute, 10587 Berlin, 
Germany,
Email: jenseisert@gmail.com}
\IEEEauthorblockA{\IEEEauthorrefmark{2}%
Hearne Institute for Theoretical Physics,
Department of Physics and Astronomy, \&
Center for Computation and Technology,\\
Louisiana State University,
Baton Rouge, Louisiana 70803, USA,
Email: mwilde@lsu.edu}}%
%

\maketitle
%

\begin{abstract}%

The Rains relative entropy of a bipartite quantum state is the tightest known upper bound on its distillable entanglement -- which has a crisp physical interpretation of entanglement as a resource -- and it is efficiently computable by convex programming. It has not been known to be a selective entanglement monotone in its own right. In this work, we strengthen the interpretation of the Rains relative entropy by showing that it is monotone under the action of selective operations that completely preserve the positivity of the partial transpose, reasonably quantifying entanglement. That is, we prove that Rains relative entropy of an ensemble generated by such an operation does not exceed the Rains relative entropy of the initial state in expectation, giving rise to the smallest, most conservative known computable selective entanglement monotone. Additionally, we show that this is true not only for the original Rains relative entropy, but also for Rains relative entropies derived from various R\'enyi relative entropies. As an application of these findings, we prove, in both the non-asymptotic and asymptotic settings, that the probabilistic approximate distillable entanglement of a state is bounded from above by various Rains relative entropies.



\end{abstract}%


\section{Introduction}

Entanglement is the feature  at the heart of quantum mechanics. One can
go as far to say it is the defining property that distinguishes it
from a classical theory \cite{BellBook}. In quantum information,
entanglement is seen as a resource for information-processing tasks 
\cite{HHHH09,PV07}.
For this reason, substantial efforts have been made to quantify and capture it, giving rise to a comprehensive theory of entanglement~\cite{HHHH09,PV07}.
%
Many measures of entanglement have been presented and their properties studied
over the years. Among those, one of the most prominent ones is the \emph{distillable
entanglement} \cite{BBPS96,BBPSSW96EPP,BDSW96}, which captures entanglement as a
resource: It quantifies the rate at which one can meaningfully extract Bell states, which can then be used, e.g., 
in quantum key distribution \cite{RevModPhys.74.145}.
Physically meaningful as this quantity is, it is
notoriously difficult to compute and known to be not convex \cite{SST01}. It may not even be Turing computable~\cite{SST01,WCP11,EMG12}.

Since the distillable entanglement is so important conceptually, and also
because a more pragmatic mindset has become more common over the years, researchers have resorted to using upper bounds on the
distillable entanglement.
The tightest known upper bound is the Rains
relative entropy \cite{R01}, a quantity that  interpolates the relative
entropy of entanglement \cite{VP98} and the logarithmic negativity \cite{ZHSL98,EP99,Vidal2002}. Due to this finding and since it can be computed as a convex program \cite{FWTD19,W18pra}, the Rains relative entropy is
considered a practically important bound
capturing notions of entanglement.

It has been well known since the original work \cite{R01} that the Rains relative entropy is non-increasing under the action of trace-preserving channels that completely preserve the \emph{positivity of the partial transpose (PPT)}, and as such, it is non-increasing under 
physically well motivated \emph{local operations and classical communication (LOCC)} channels that reflect the 
separated laboratory paradigm for bipartite quantum systems.
This is in fact one of the 
properties required to prove that it is an upper bound on distillable entanglement.
That said, importantly, what has been left open over the years
is to determine whether it actually has the stronger property of being a
monotone under selective quantum operations: This property
to be a \emph{selective entanglement monotone} 
\cite{VidalEntanglementMonotones,VP98}
is, however, an important property so that the Rains quantity can be 
regarded as a quantity meaningfully capturing entanglement according to this stronger notion.
This means that the average Rains relative entropy of an ensemble produced by 
suitable selective operations
does not exceed 
that
of the original state.

In this work, we bring this problem to rest by proving that the Rains relative entropy is a selective PPT monotone (thus implying that it is also a selective LOCC monotone). Since it is the tightest known upper bound on the distillable entanglement
-- which itself is the smallest reasonable entanglement monotone
\cite{PhysRevLett.84.2014} --
it is at the same time established as the smallest known computable measure of entanglement. Hence, it reasonably quantifies entanglement in a most conservative fashion, and as such provides a good 
guideline to assess quantum correlations in practically relevant 
settings.
Our proof has a general form, and so it applies not only to the original Rains relative entropy, but also to others constructed from various R\'enyi relative entropies.

As an application, we prove that the Rains relative entropy is an upper bound on the probabilistic approximate distillable entanglement of a bipartite state, in both the non-asymptotic and asymptotic settings. Since the probabilistic approximate distillable entanglement of a state is an upper bound on the usual distillable entanglement and it arguably has a more direct connection to experimental practice, our result strengthens the interpretation of the Rains relative entropy, thus giving the tightest known upper bound on the probabilistic approximate distillable entanglement of a state. 

%

\section{Preliminaries}

In this preliminary section, we set notation and review background material. In particular, we recall the definitions of various
relative entropies and their properties. These are then used to define Rains relative entropies, which are the signature entanglement measures considered
in this work. After that, we present some properties of the Rains relative
entropies and recall the definition of the set of completely
\emph{positive-partial-transpose (PPT) preserving channels}.

\subsection{Quantum states, channels, and relative entropies}

In what follows, naturally, notions of quantum states, channels 
and relative entropies feature strongly;
we point to the textbooks \cite{H06book,H12,W17,Wat18,KW20book} for 
more background on quantum information theory. 
We denote with  $\mathcal{S}(\mathcal{H})$ the
set of quantum states (unit-trace, positive semi-definite operators) acting on a Hilbert space $\mathcal{H}$. Let $\mathcal{L}%
(\mathcal{H})$ denote the set of linear operators acting on $\mathcal{H}$, and
let $\mathcal{L}_{+}(\mathcal{H})$ denote the set of positive semi-definite
operators acting on~$\mathcal{H}$. Let $\mathcal{N}:\mathcal{L}(\mathcal{H}%
)\rightarrow\mathcal{L}(\mathcal{H}^{\prime})$ denote a quantum channel, which
is a completely positive, trace-preserving map, and let $\operatorname{CPTP}%
(\mathcal{H},\mathcal{H}^{\prime})$ denote the set of all such quantum
channels taking $\mathcal{L}(\mathcal{H})$ to $\mathcal{L}(\mathcal{H}%
^{\prime})$. More generally, let $\operatorname{CP}(\mathcal{H},\mathcal{H}%
^{\prime})$ denote the set of completely positive maps taking $\mathcal{L}%
(\mathcal{H})$ to $\mathcal{L}(\mathcal{H}^{\prime})$.
The \emph{quantum relative entropy} of $\omega\in\mathcal{S}(\mathcal{H})$ and
$\tau\in\mathcal{L}_{+}(\mathcal{H})$ is defined by \cite{U62}
\begin{equation*}
D(\omega\Vert\tau)\coloneqq\left\{
\begin{array}
[c]{cc}%
\operatorname{Tr}[\omega(\log_{2}\omega-\log_{2}\tau)] & \text{if (F1)}\\
+\infty & \text{else}%
\end{array}
\right.  ,
\end{equation*}
where $\text{(F1)}\equiv\operatorname{supp}(\omega)\subseteq
\operatorname{supp}(\tau)$. The \emph{sandwiched R\'enyi relative entropy} is defined
for $\alpha\in(0,1)\cup(1,\infty)$ as 
\begin{align}
&  \widetilde{D}_{\alpha}(\omega\Vert\tau)\coloneqq\frac{1}{\alpha-1}\log
_{2}\widetilde{Q}_{\alpha}(\omega\Vert\tau),\label{eq:sandwiched-Renyi-def}\\
&  \widetilde{Q}_{\alpha}(\omega\Vert\tau)\coloneqq\left\{
\begin{array}
[c]{cc}%
\operatorname{Tr}\!\left[  \left(  \tau^{\frac{1-\alpha}{2\alpha}}\omega
\tau^{\frac{1-\alpha}{2\alpha}}\right)  ^{\alpha}\right]  & \text{if (F2)}\\
+\infty & \text{else}%
\end{array}
\right.  ,
\end{align}
where $\text{(F2)}\equiv\alpha\in(0,1)\vee(\alpha>1\wedge\operatorname{supp}%
(\omega)\subseteq\operatorname{supp}(\tau))$ \cite{MDSFT13,WWY13}. The following limit holds
\cite{MDSFT13,WWY13}%
\begin{equation*}
\lim_{\alpha\rightarrow1}\widetilde{D}_{\alpha}(\omega\Vert\tau)=D(\omega
\Vert\tau).
\end{equation*}
The following data-processing inequalities 
\begin{align}
D(\omega\Vert\tau)  &  \geq D(\mathcal{N}(\omega)\Vert\mathcal{N}%
(\tau)),\label{eq:DP}\,
\widetilde{D}_{\alpha}(\omega\Vert\tau)    \geq\widetilde{D}_{\alpha
}(\mathcal{N}(\omega)\Vert\mathcal{N}(\tau)), 
\end{align}
hold for $\omega\in\mathcal{S}%
(\mathcal{H})$, $\tau\in\mathcal{L}_{+}(\mathcal{H})$, $\mathcal{N}%
\in\operatorname{CPTP}(\mathcal{H},\mathcal{H}^{\prime})$, and $\alpha
\in\lbrack1/2,1)\cup(1,\infty)$,
with the first inequality established by 
ref.~\cite{Lindblad1975} and the second by
ref.~\cite{FL13} (see also
ref.~\cite{W17f}). The Petz and geometric R\'enyi relative
entropies are defined by the same construction in
\eqref{eq:sandwiched-Renyi-def}, but by replacing $\widetilde{Q}_{\alpha}$
with $Q_{\alpha}(\omega\Vert\tau)\coloneqq \operatorname{Tr}[\omega^{\alpha
}\tau^{1-\alpha}]$ \cite{P86}\ and $\widehat{Q}_{\alpha}(\omega\Vert
\tau)\coloneqq \operatorname{Tr}[\tau(\tau^{-1/2}\omega\tau^{-1/2})^{\alpha}]$
\cite{M13,Matsumoto2018,KW20book}, respectively. A
data-processing inequality, similar to 
the above,
holds for them
for $\alpha\in(0,1)\cup(1,2]$.
The following property of the quantum relative entropy is well known \cite{KW20book}.

\begin{lemma}[Direct-sum property]
\label{lem:direct-sum-rel-ent-prop}Let $\kappa\in\mathcal{S}(\mathcal{H}%
_{XA})$ and $\lambda\in\mathcal{L}_{+}(\mathcal{H}_{XA})$ be
classical--quantum, i.e., of the form%
\begin{equation*}
\kappa\coloneqq\sum_{x}p(x)|x\rangle\!\langle x|\otimes\kappa_{x},\quad
\lambda\coloneqq\sum_{x}q(x)|x\rangle\!\langle x|\otimes\lambda_{x},
\end{equation*}
where $\{|x\rangle\}_{x}$ is an orthonormal basis for $\mathcal{H}_{X}$,
$\{p(x)\}_{x}$ is a probability distribution, $\kappa_{x}\in\mathcal{S}%
(\mathcal{H}_{A})$ for all~$x$, $\{q(x)\}_{x}$ is a non-negative function, and
$\lambda_{x}\in\mathcal{L}_{+}(\mathcal{H}_{A})$ for all~$x$. Then%
\begin{equation*}
D(\kappa\Vert\lambda)=D(p\Vert q)+\sum_{x}p(x)D(\kappa_{x}\Vert\lambda_{x}),
\end{equation*}
where the classical relative entropy is defined as%
\begin{equation*}
D(p\Vert q)\coloneqq\left\{
\begin{array}
[c]{cc}%
\sum_{x}p(x)\log_{2}\!\left(  \frac{p(x)}{q(x)}\right)  & \text{if
}\operatorname{supp}(p)\subseteq\operatorname{supp}(q)\\
+\infty & \text{else.}%
\end{array}
\right.
\end{equation*}

\end{lemma}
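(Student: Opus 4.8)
The plan is to exploit the block-diagonal structure of $\kappa$ and $\lambda$ with respect to the orthonormal basis $\{|x\rangle\}_x$ of $\mathcal{H}_X$, thereby reducing the claim to linearity of the trace together with the elementary scalar identity $\log_2(c\,\sigma) = (\log_2 c)\,\Pi_\sigma + \log_2 \sigma$, valid for any constant $c>0$ and any positive semi-definite $\sigma$ with support projection $\Pi_\sigma$. Here and below, $\log_2$ of a positive semi-definite operator is understood to act on its support, consistent with the convention used in defining $D$.

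First I would handle the support and finiteness bookkeeping. Since $\kappa$ and $\lambda$ are block diagonal along $X$, the support of $\kappa$ is the direct sum, over $x$ with $p(x)>0$, of the subspaces $\operatorname{span}\{|x\rangle\}\otimes\operatorname{supp}(\kappa_x)$, and similarly for $\lambda$. Hence $\operatorname{supp}(\kappa)\subseteq\operatorname{supp}(\lambda)$ holds if and only if $\operatorname{supp}(p)\subseteq\operatorname{supp}(q)$ and $\operatorname{supp}(\kappa_x)\subseteq\operatorname{supp}(\lambda_x)$ for every $x\in\operatorname{supp}(p)$. This conjunction is precisely the condition under which $D(p\Vert q)+\sum_x p(x)\,D(\kappa_x\Vert\lambda_x)$ is finite (interpreting the $p(x)=0$ terms as $0$). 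Thus the two sides of the claimed identity are $+\infty$ simultaneously, and it remains only to treat the case in which both are finite.

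Assuming finiteness, I would expand $D(\kappa\Vert\lambda) = \operatorname{Tr}[\kappa\log_2\kappa] - \operatorname{Tr}[\kappa\log_2\lambda]$ block by block. Orthonormality of $\{|x\rangle\}_x$ gives $\log_2\kappa = \sum_{x:\,p(x)>0}|x\rangle\!\langle x|\otimes\log_2(p(x)\kappa_x)$; invoking the scalar identity above together with $\operatorname{Tr}[\kappa_x]=1$ yields
\[
\operatorname{Tr}[\kappa\log_2\kappa] = \sum_x p(x)\log_2 p(x) + \sum_x p(x)\operatorname{Tr}[\kappa_x\log_2\kappa_x].
\]
Likewise, using that $\operatorname{supp}(\kappa_x)\subseteq\operatorname{supp}(\lambda_x)$ implies $\operatorname{Tr}[\kappa_x\Pi_{\lambda_x}]=1$ for $x\in\operatorname{supp}(p)$, one obtains
\[
\operatorname{Tr}[\kappa\log_2\lambda] = \sum_x p(x)\log_2 q(x) + \sum_x p(x)\operatorname{Tr}[\kappa_x\log_2\lambda_x].
\]
Subtracting the second display from the first and collecting the classical terms into $D(p\Vert q)$ and the quantum terms into $\sum_x p(x)\,D(\kappa_x\Vert\lambda_x)$ gives the lemma.

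I do not expect a genuine obstacle here: the only delicate point is the first step, namely confirming that the $\log_2$-on-support convention makes the blockwise computation legitimate and that the domain of finiteness agrees on the two sides of the identity. Everything else is routine manipulation of the direct-sum structure.
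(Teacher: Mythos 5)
Your proof is correct. The paper itself does not prove this lemma---it cites it as well known from the textbook reference---and your argument (blockwise evaluation of $\operatorname{Tr}[\kappa\log_2\kappa]-\operatorname{Tr}[\kappa\log_2\lambda]$ using the direct-sum structure, the identity $\log_2(c\sigma)=(\log_2 c)\Pi_\sigma+\log_2\sigma$ on supports, and the observation that $\operatorname{Tr}[\kappa_x\Pi_{\lambda_x}]=1$ under the support inclusion) is exactly the standard derivation, with the finiteness/support bookkeeping handled properly on both sides.
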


A related property actually holds true also 
for sandwiched, Petz, and geometric R\'enyi
relative entropy. Building on ref.~\cite[Lemma~3]{WW19alog}, we provide a proof  in Appendix~\ref{sec:proof-direct-sum-ineq-sand-ren}.

\begin{lemma}[General direct-sum
properties]
\label{lem:direct-sum-ineq-sand-renyi} Let $\kappa\in\mathcal{S}%
(\mathcal{H}_{XA})$ and $\lambda\in\mathcal{L}_{+}(\mathcal{H}_{XA})$ be
classical--quantum, as defined as in Lemma~\ref{lem:direct-sum-rel-ent-prop}.
Then for $\alpha>1$%
\begin{equation}
\widetilde{D}_{\alpha}(\kappa\Vert\lambda)\geq D(p\Vert q)+\sum_{x}%
p(x)\widetilde{D}_{\alpha}(\kappa_{x}\Vert\lambda_{x}).
\label{eq:direct-sum-ineq-sand-renyi}%
\end{equation}
The same inequality holds for $\alpha>1$ with $\widetilde{D}_{\alpha}$ replaced by the Petz
or geometric R\'enyi relative entropy.
\end{lemma}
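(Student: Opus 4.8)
The plan is to reduce \eqref{eq:direct-sum-ineq-sand-renyi} to the elementary fact that the classical R\'enyi divergence of a probability distribution against a nonnegative function is nondecreasing in its order $\alpha$, and therefore dominates the Kullback--Leibler divergence for $\alpha>1$; this monotonicity is the ingredient supplied by \cite[Lemma~3]{WW19alog}. The first step exploits the joint block structure. Since $\kappa$ and $\lambda$ are simultaneously block-diagonal in $\{|x\rangle\}_x$, so is every operator appearing inside the trace in the definition of $\widetilde{Q}_\alpha$: writing $\lambda^{\frac{1-\alpha}{2\alpha}}=\sum_x q(x)^{\frac{1-\alpha}{2\alpha}}|x\rangle\!\langle x|\otimes\lambda_x^{\frac{1-\alpha}{2\alpha}}$ (with negative powers taken on supports) and using that a positive power of a block-diagonal positive semidefinite operator acts blockwise, one obtains
\begin{equation*}
\left(\lambda^{\frac{1-\alpha}{2\alpha}}\kappa\,\lambda^{\frac{1-\alpha}{2\alpha}}\right)^{\alpha}=\sum_x p(x)^\alpha q(x)^{1-\alpha}\,|x\rangle\!\langle x|\otimes\left(\lambda_x^{\frac{1-\alpha}{2\alpha}}\kappa_x\lambda_x^{\frac{1-\alpha}{2\alpha}}\right)^{\alpha},
\end{equation*}
and, after tracing, the exact identity $\widetilde{Q}_\alpha(\kappa\Vert\lambda)=\sum_x p(x)^\alpha q(x)^{1-\alpha}\,\widetilde{Q}_\alpha(\kappa_x\Vert\lambda_x)$, valid with the usual $0,+\infty$ conventions (condition (F2) for the pair $(\kappa,\lambda)$ reduces to the blockwise conditions on $\operatorname{supp}(p)$).

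Next I would collapse the blockwise terms into one modified measure. Using $\widetilde{Q}_\alpha(\kappa_x\Vert\lambda_x)=2^{(\alpha-1)\widetilde{D}_\alpha(\kappa_x\Vert\lambda_x)}$ and setting $r(x)\coloneqq q(x)\,2^{-\widetilde{D}_\alpha(\kappa_x\Vert\lambda_x)}$, one has $q(x)^{1-\alpha}\widetilde{Q}_\alpha(\kappa_x\Vert\lambda_x)=r(x)^{1-\alpha}$, hence
\begin{equation*}
\widetilde{D}_\alpha(\kappa\Vert\lambda)=\frac{1}{\alpha-1}\log_2\sum_x p(x)^\alpha r(x)^{1-\alpha}=D_\alpha(p\Vert r),
\end{equation*}
the classical R\'enyi divergence of the probability distribution $p$ against the nonnegative (generally unnormalized) function $r$. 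Since $r$ has finite total mass and $D_\alpha(p\Vert c\,r')=D_\alpha(p\Vert r')-\log_2 c$ for $c>0$, monotonicity in $\alpha$ passes from normalized to unnormalized second arguments, so $D_\alpha(p\Vert r)\geq\lim_{\beta\to 1^+}D_\beta(p\Vert r)=D(p\Vert r)$ for $\alpha>1$. Expanding, $D(p\Vert r)=\sum_x p(x)\log_2\frac{p(x)}{q(x)}+\sum_x p(x)\widetilde{D}_\alpha(\kappa_x\Vert\lambda_x)=D(p\Vert q)+\sum_x p(x)\widetilde{D}_\alpha(\kappa_x\Vert\lambda_x)$, which is exactly \eqref{eq:direct-sum-ineq-sand-renyi}. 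The degenerate cases, namely $q(x)=0<p(x)$ (both sides then $+\infty$) and $p(x)=0$ (the $x$-term drops), are checked directly.

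For the Petz and geometric R\'enyi relative entropies the same argument applies: the quantities $Q_\alpha(\omega\Vert\tau)=\operatorname{Tr}[\omega^\alpha\tau^{1-\alpha}]$ and $\widehat{Q}_\alpha(\omega\Vert\tau)=\operatorname{Tr}[\tau(\tau^{-1/2}\omega\tau^{-1/2})^\alpha]$ also factor over the classical blocks, producing the analogous identity with $\widetilde{Q}_\alpha$ replaced by $Q_\alpha$ or $\widehat{Q}_\alpha$, after which the reduction to $D_\alpha(p\Vert r)$ and the appeal to classical R\'enyi monotonicity go through verbatim. I expect the only delicate point to be the bookkeeping of support conditions and $0,+\infty$ conventions in the block decomposition of the trace; the analytic content — monotonicity of the classical R\'enyi divergence in its order — is standard and requires no quantum input.
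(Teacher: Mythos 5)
Your proof is correct and follows essentially the same route as the paper's: both rest on the exact block factorization $\widetilde{Q}_{\alpha}(\kappa\Vert\lambda)=\sum_{x}p(x)^{\alpha}q(x)^{1-\alpha}\widetilde{Q}_{\alpha}(\kappa_{x}\Vert\lambda_{x})$ followed by a single application of concavity of the logarithm. The only cosmetic difference is that you package that Jensen step as monotonicity in $\alpha$ of the classical R\'enyi divergence $D_{\alpha}(p\Vert r)$ with $r(x)=q(x)2^{-\widetilde{D}_{\alpha}(\kappa_{x}\Vert\lambda_{x})}$, whereas the paper applies the concavity inequality directly to the sum.
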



\subsection{Rains relative entropies}

We now turn to discussing the 
central quantity of this work.
Let $\rho\in\mathcal{S}(\mathcal{H}_{AB})$ denote a bipartite state acting on
a tensor-product Hilbert space $\mathcal{H}_{AB}:=\mathcal{H}_{A}%
\otimes\mathcal{H}_{B}$. 
To define it, we need to refer to the partial
transpose of a quantum state. To define it,
denote with $T(\cdot)\coloneqq\sum_{i,j}|i\rangle\!\langle j|(\cdot)|i\rangle
\!\langle j|$ 
the \emph{transpose map}, 
with $\{|i\rangle\}_{i}$ an orthonormal
basis for $\mathcal{H}_{B}$, so that $\operatorname{id}\otimes T:\mathcal{L}%
(\mathcal{H}_{AB})\rightarrow\mathcal{L}(\mathcal{H}_{AB})$ 
is the \emph{partial
transpose} map, given 
by $(\operatorname{id}\otimes T)(\cdot)\coloneqq\sum
_{i,j}(I\otimes|i\rangle\!\langle j|)(\cdot)(I\otimes|i\rangle\!\langle j|)$.
The Rains relative entropy has 
originally been defined as
\cite{R99,R01}
\begin{align}
R(\rho)  &  \coloneqq\inf_{\sigma\in\mathcal{S}(\mathcal{H}%
_{AB})}
\bigl(
D(\rho\Vert\sigma)
+ 
\log_2 \| 
(
\operatorname{id}\otimes T)
(\sigma)\|_1
\bigr),
\end{align}
where $\|\cdot \|_1$ denotes the trace norm. This expression has later been identified to be equal to the convex program
\cite{AdMVW02}
\begin{align}
R(\rho)  &  = \inf_{\sigma\in\operatorname{PPT}^{\prime}(\mathcal{H}%
_{AB})}D(\rho\Vert\sigma),
\label{eq:rains-rel-ent}\\
\operatorname{PPT}^{\prime}(\mathcal{H}_{AB})  &  \coloneqq\left\{
\sigma:\sigma\in\mathcal{L}_{+}(\mathcal{H}_{AB})\wedge\left\Vert \left(
\operatorname{id}\otimes T\right)  (\sigma)\right\Vert _{1}\leq1\right\}  .\nonumber
\end{align}
%
For
$\alpha\in(0,1)\cup(1,\infty)$,
we define with
\begin{equation}
\widetilde{R}_{\alpha}(\rho)\coloneqq\inf_{\sigma\in\operatorname{PPT}%
^{\prime}(\mathcal{H}_{AB})}\widetilde{D}_{\alpha}(\rho\Vert\sigma)
\label{eq:s-renyi-rains-rel-ent}%
\end{equation}
the sandwiched Rains relative entropy, 
satisfying \cite{TWW14}
\begin{equation}
\lim_{\alpha\rightarrow1}\widetilde{R}_{\alpha}(\rho)=R(\rho).
\label{eq:rains-a-limit-1}%
\end{equation}
By replacing $\widetilde{D}_{\alpha}$ in \eqref{eq:s-renyi-rains-rel-ent} with
either the Petz or geometric R\'{e}nyi relative entropy, we can construct
alternative Rains relative entropies   \cite{TWW14,KW20book}. 
They are subadditive in that
\begin{equation}
R(\rho_{0}\otimes\rho_{1})\leq R(\rho_{0})+R(\rho_{1}),\ \widetilde{R}%
_{\alpha}(\rho_{0}\otimes\rho_{1})\leq\widetilde{R}_{\alpha}(\rho
_{0})+\widetilde{R}_{\alpha}(\rho_{1}), \label{eq:Rains-subadditive}%
\end{equation}
for all
$\alpha\in(0,1)\cup(1,\infty)$,
where $\rho_{i}\in\mathcal{S}(\mathcal{H}_{A_{i}B_{i}})$ for $i\in\left\{
0,1\right\}  $ and the bipartite cut for $\rho_{0}\otimes\rho_{1}$ is $A_{0}A_{1}|B_{0}B_{1}$.
As a direct consequence of the joint convexity of quantum relative
entropy~\cite{LR73}, the Rains relative entropy is convex \cite{R01},
\begin{equation*}
R(\overline{\rho})\leq\sum_{x}p(x)R(\rho_{x}),
\end{equation*}
where $\overline{\rho}\coloneqq \sum_{x}p(x)\rho_{x}$, with $\{p(x)\}_{x}$ a
probability distribution and $\rho_{x}\in\mathcal{S}(\mathcal{H}_{AB})$ for
all $x$. Indeed, building on a statement from 
ref.~\cite[Lemma~3]{AdMVW02}, one can capture the 
Rains relative entropy in a more 
general way, proven in Appendix~\ref{sec:proof-rains-rewrite}.

\begin{lemma}[Scaling property]
\label{lem:rains-identity}Let $\mathbb{D}(\omega\Vert\tau)$ be a real-valued
function of $\omega\in\mathcal{S}(\mathcal{H})$ and $\tau\in\mathcal{L}%
_{+}(\mathcal{H})$, which 
satisfies 
\begin{equation}
\mathbb{D}(\omega\Vert c\tau)=\mathbb{D}(\omega\Vert\tau)-\log_{2}c
\label{eq:c-prop}%
\end{equation}
for all $c>0$. For $\rho\in\mathcal{S}(\mathcal{H}_{AB})$, define 
\begin{equation*}
\mathbb{R}(\rho) \coloneqq 
\inf_{\sigma\in\mathcal{S}(\mathcal{H}_{AB})}\left(
\mathbb{D}(\rho\Vert\sigma)+\log_{2}\!\left\Vert \left(  \operatorname{id}%
\otimes T\right)  \left(  \sigma\right)  \right\Vert _{1}\right)
\end{equation*}
as the
$\mathbb{D}$-Rains relative entropy.
Then
\begin{align}
\mathbb{R}(\rho)  &  =
\inf_{\sigma\in\operatorname{PPT}^{\prime
}(\mathcal{H}_{AB})}\mathbb{D}(\rho\Vert\sigma) \label{eq:rains-alt-exp}\\
&  =\inf_{\sigma\in\mathcal{L}_{+}(\mathcal{H}_{AB})}\left(  \mathbb{D}%
(\rho\Vert\sigma)+\log_{2}\!\left\Vert \left(  \operatorname{id}\otimes
T\right)  \left(  \sigma\right)  \right\Vert _{1}\right)  .
\label{eq:rains-alt-alt-exp}%
\end{align}

\end{lemma}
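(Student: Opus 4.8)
The plan is to deduce all three expressions from a single observation: the functional appearing both in the definition of $\mathbb{R}(\rho)$ and in \eqref{eq:rains-alt-alt-exp}, namely $f(\sigma)\coloneqq\mathbb{D}(\rho\Vert\sigma)+\log_{2}\Vert(\operatorname{id}\otimes T)(\sigma)\Vert_{1}$, is invariant under positive rescalings of $\sigma$. Indeed, for any $c>0$, the scaling property \eqref{eq:c-prop} gives $\mathbb{D}(\rho\Vert c\sigma)=\mathbb{D}(\rho\Vert\sigma)-\log_{2}c$, while homogeneity of the trace norm gives $\Vert(\operatorname{id}\otimes T)(c\sigma)\Vert_{1}=c\Vert(\operatorname{id}\otimes T)(\sigma)\Vert_{1}$, so the two $\log_{2}c$ contributions cancel and $f(c\sigma)=f(\sigma)$. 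Moreover, since $\operatorname{id}\otimes T$ is trace preserving and invertible, $\Vert(\operatorname{id}\otimes T)(\sigma)\Vert_{1}>0$ exactly when $\sigma\neq 0$; the value $\sigma=0$, as well as any $\sigma$ violating the support condition implicit in $\mathbb{D}$, makes the objective $+\infty$, so we may restrict every infimum below to nonzero $\sigma$ with no loss.

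First I would show that \eqref{eq:rains-alt-alt-exp} equals the definition of $\mathbb{R}(\rho)$. The inequality ``$\leq$'' is immediate from $\mathcal{S}(\mathcal{H}_{AB})\subseteq\mathcal{L}_{+}(\mathcal{H}_{AB})$. For ``$\geq$'', given nonzero $\sigma\in\mathcal{L}_{+}(\mathcal{H}_{AB})$, rescale it to the state $\sigma/\operatorname{Tr}[\sigma]\in\mathcal{S}(\mathcal{H}_{AB})$; scale-invariance of $f$ gives $f(\sigma/\operatorname{Tr}[\sigma])=f(\sigma)$, so $\inf_{\sigma\in\mathcal{L}_{+}(\mathcal{H}_{AB})}f(\sigma)\geq\inf_{\sigma\in\mathcal{S}(\mathcal{H}_{AB})}f(\sigma)$, and the two infima coincide.

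Next I would show \eqref{eq:rains-alt-exp} equals \eqref{eq:rains-alt-alt-exp}. For ``$\leq$'' (meaning \eqref{eq:rains-alt-alt-exp} no larger than \eqref{eq:rains-alt-exp}): every $\sigma\in\operatorname{PPT}^{\prime}(\mathcal{H}_{AB})$ lies in $\mathcal{L}_{+}(\mathcal{H}_{AB})$ and satisfies $\log_{2}\Vert(\operatorname{id}\otimes T)(\sigma)\Vert_{1}\leq 0$, hence $f(\sigma)\leq\mathbb{D}(\rho\Vert\sigma)$; taking the infimum over $\operatorname{PPT}^{\prime}(\mathcal{H}_{AB})$ and then enlarging the feasible set of $f$ to all of $\mathcal{L}_{+}(\mathcal{H}_{AB})$ gives the claim. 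For ``$\geq$'': given nonzero $\sigma\in\mathcal{L}_{+}(\mathcal{H}_{AB})$, set $c\coloneqq\Vert(\operatorname{id}\otimes T)(\sigma)\Vert_{1}>0$ and $\sigma'\coloneqq\sigma/c$, which is feasible for \eqref{eq:rains-alt-exp} since $\Vert(\operatorname{id}\otimes T)(\sigma')\Vert_{1}=1$; by \eqref{eq:c-prop}, $\mathbb{D}(\rho\Vert\sigma')=\mathbb{D}(\rho\Vert\sigma)+\log_{2}c=f(\sigma)$, so $\inf_{\sigma''\in\operatorname{PPT}^{\prime}(\mathcal{H}_{AB})}\mathbb{D}(\rho\Vert\sigma'')\leq f(\sigma)$, and taking the infimum over $\sigma$ completes the chain of equalities.

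I do not expect a genuine obstacle: the argument is a change of variables powered entirely by \eqref{eq:c-prop}, extending the special case $\mathbb{D}=D$ of \cite[Lemma~3]{AdMVW02}. The only points needing a little care are bookkeeping ones, namely confirming that $\sigma=0$ and $\sigma$ with $\mathbb{D}(\rho\Vert\sigma)=+\infty$ never influence the infima, and noting that the equalities hold whether or not the infima are attained, since each step produces, for an arbitrary feasible point of one problem, a feasible point of the other with equal objective value.
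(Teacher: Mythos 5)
Your proof is correct and follows essentially the same route as the paper's: both directions hinge on the rescalings $\sigma\mapsto\sigma/\Vert(\operatorname{id}\otimes T)(\sigma)\Vert_{1}$ (to land in $\operatorname{PPT}^{\prime}$) and $\sigma\mapsto\sigma/\operatorname{Tr}[\sigma]$ (to land in $\mathcal{S}$), combined with \eqref{eq:c-prop} and the bound $\Vert(\operatorname{id}\otimes T)(\sigma)\Vert_{1}\leq1$ on $\operatorname{PPT}^{\prime}$. Your packaging via the scale-invariant functional $f$ is a clean way to organize the same computation, and your side remarks about $\sigma=0$ and infinite values are harmless bookkeeping the paper leaves implicit.
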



\begin{remark}[General Rains
relative entropies]
Since the property in \eqref{eq:c-prop} holds for the quantum relative entropy
and for the sandwiched R\'{e}nyi relative entropy for $\alpha\in
(0,1)\cup(1,\infty)$, the equalities in \eqref{eq:rains-alt-exp} and
\eqref{eq:rains-alt-alt-exp} hold for the Rains relative entropies constructed
from them (as given in \eqref{eq:rains-rel-ent} and
\eqref{eq:s-renyi-rains-rel-ent}). The same is true for the Rains relative
entropies constructed from the Petz and geometric R\'{e}nyi relative
entropies.
\end{remark}

\subsection{Completely PPT-preserving bipartite channels}

We now turn to describing meaningful classes of operations in the bipartite setting. From the outset, we generalize the physically meaningful LOCC,
but it should be clear that all that is said applies to this
practically important scenario.
Let $\mathcal{P}:\mathcal{L}(\mathcal{H}_{AB})\rightarrow\mathcal{L}%
(\mathcal{H}_{A^{\prime}B^{\prime}})$ denote a bipartite quantum channel, with
input systems $A$ and $B$ and output systems $A^{\prime}$ and $B^{\prime}$.
Physically, we think of one party Alice possessing the input system $A$ and
the output system $A^{\prime}$ and another party Bob in a distant laboratory
possessing the input system $B$ and the output system $B^{\prime}$. A
bipartite channel $\mathcal{P}$ is \emph{completely PPT\ preserving} \cite{R99,CVGG17} 
if the map $\left(  \operatorname{id}\otimes T\right)
\circ\mathcal{P}\circ\left(  \operatorname{id}\otimes T\right)  $ is
completely positive.
It is well known that
\begin{equation}
\sigma\in\operatorname{PPT}^{\prime}(\mathcal{H}_{AB}) \quad\Rightarrow
\quad\mathcal{P}(\sigma)\in\operatorname{PPT}^{\prime}(\mathcal{H}_{A^{\prime
}B^{\prime}}), \label{eq:PPT'-closure}%
\end{equation}
and this implies from \eqref{eq:DP}
that%
\begin{equation}
R(\rho)\geq R(\mathcal{P}(\rho)),\qquad\widetilde{R}_{\alpha}(\rho
)\geq\widetilde{R}_{\alpha}(\mathcal{P}(\rho)), \label{eq:PPT-monotone}%
\end{equation}
where the latter holds for all $\alpha\in[1/2,1)\cup(1,\infty)$. Indeed, to
see \eqref{eq:PPT'-closure}, let $\sigma\in\operatorname{PPT}^{\prime
}(\mathcal{H}_{AB})$ and consider that
\begin{align}
&  \left\Vert \left(  \operatorname{id}\otimes T\right)  (\mathcal{P}%
(\sigma))\right\Vert _{1}=\left\Vert (\left(  \operatorname{id}\otimes T\right)  \circ
\mathcal{P}\circ\left(  \operatorname{id}\otimes T\right)  )(\left(
\operatorname{id}\otimes T\right)  (\sigma))\right\Vert _{1}\nonumber\\
&  \leq\left\Vert \left(  \operatorname{id}\otimes T\right)  (\sigma
)\right\Vert _{1}\leq1,
\end{align}
where the first inequality follows because the trace norm does not increase
under the action of a completely positive trace-preserving map (in this case
$\left(  \operatorname{id}\otimes T\right)  \circ\mathcal{P}\circ\left(
\operatorname{id}\otimes T\right)  $).


The significance of the set of completely
PPT-preserving channels \cite{R99,R01,APE03}
is that it contains the operationally relevant set of
local operations and classical communication (LOCC) channels
\cite{BDSW96,CLM+14}. The constraints specifying the former set of  channels are semi-definite, and it is thus much
easier in a computational sense to optimize an objective function over this
set of channels than to optimize over the set of LOCC channels (see, e.g., \cite{SW21}).
Generalizing completely PPT-preserving 
channels, 
physically well motivated, and of relevance for us here,
is the set of \emph{selective PPT operations}.

\begin{definition}
[Selective PPT operation]\label{def:selective-PPT} The set $\{\mathcal{P}%
_{x}\}_{x}$ constitutes a selective PPT operation if $\mathcal{P}_{x}
\in\operatorname{CP}(\mathcal{H}_{AB},\mathcal{H}_{A^{\prime}B^{\prime}})$ for
all $x$, $\left(  \operatorname{id}\otimes T\right)  \circ\mathcal{P}_{x}%
\circ\left(  \operatorname{id}\otimes T\right)  \in\operatorname{CP}%
(\mathcal{H}_{AB},\mathcal{H}_{A^{\prime}B^{\prime}})$ for all~$x$, and the
sum map $\sum_{x}\mathcal{P}_{x}$ is trace preserving.
\end{definition}

When performed on a quantum state $\rho\in\mathcal{S}(\mathcal{H}_{AB})$, the
selective PPT operation $\{\mathcal{P}_{x}\}_{x}$ can be understood as a
particular kind of quantum instrument, which outputs the state $\mathcal{P}%
_{x}(\rho)/p(x)$ with probability $p(x)=\operatorname{Tr}[\mathcal{P}_{x}%
(\rho)]$. Similar to the channel case discussed above, the set of selective
PPT operations contains the set of \emph{selective LOCC operations}
  \cite{CLM+14}.
%
Let $\sigma\in\mathcal{S}(\mathcal{H}_{AB})$ be a PPT\ state, i.e., a state
satisfying $(\operatorname{id}\otimes T)(\sigma)\geq0$. Let $\sigma
_{x}\coloneqq\mathcal{P}_{x}(\sigma)/q(x)$ with
$q(x)\coloneqq\operatorname{Tr}[\mathcal{P}_{x}(\sigma)]$. It follows that
$\sigma_{x}$ is also a PPT\ state because%
\begin{multline}
\left(  \operatorname{id}\otimes T\right)  (\sigma_{x})=\frac{1}{q(x)}\left(
\operatorname{id}\otimes T\right)  \mathcal{P}_{x}(\sigma)\\
=\frac{1}{q(x)}\left(  \left(  \operatorname{id}\otimes T\right)
\circ\mathcal{P}_{x}\left(  \operatorname{id}\otimes T\right)  \right)
\left(  \left(  \operatorname{id}\otimes T\right)  (\sigma)\right)  \geq0,
\end{multline}
where the inequality follows because $\left(  \operatorname{id}\otimes
T\right)  \circ\mathcal{P}_{x}\left(  \operatorname{id}\otimes T\right)  $ is
completely positive by definition and $\left(  \operatorname{id}\otimes
T\right)  (\sigma)\geq0$. Now suppose that $\sigma\in\operatorname{PPT}%
^{\prime}(\mathcal{H}_{AB})$. With $\sigma_{x}$ and $q(x)$ as defined above,
we know that the following inequality holds%
\begin{equation}
\left\Vert \left(  \operatorname{id}\otimes T\right)  (\sigma)\right\Vert
_{1}\geq\sum_{x}q(x)\left\Vert \left(  \operatorname{id}\otimes T\right)
(\sigma_{x})\right\Vert _{1}, \label{eq:negativity-monotone}%
\end{equation}
as a consequence of 
ref.~\cite[Eq.~(8)]{Plenio2005b}, 
following the approach given in 
ref.~\cite[Proposition~2.1]{E01} (see also
ref.~\cite[Proposition~9.10]{KW20book}). However, it is not clear if $\sigma_{x}%
\in\operatorname{PPT}^{\prime}(\mathcal{H}_{AB})$ for all~$x$. This is the
main obstacle to overcome in proving Theorem~\ref{thm:main} of the next
section, and we do so by exploiting Lemma~\ref{lem:rains-identity} and
properties of generalized relative entropies.

\section{Selective PPT\ monotonicity}

We now derive one of the main results of this work: All the Rains relative
entropies discussed thus far are selective PPT monotones. By the fact that
selective LOCC operations are contained in the set of selective PPT
operations, it follows that these quantities are also selective LOCC
monotones. Again, as such, 
they are quantities meaningfully
quantifying entanglement, according to the stronger selective notion, in a 
computable and conservative way.
This property is also one of the main ones needed to establish the Rains
relative entropies as upper bounds on probabilistic approximate distillable
entanglement, which we do in Section~\ref{sec:PADE}.
We provide a 
general statement in Theorem~\ref{thm:main} below, which
applies to the standard, 
as well as other, Rains relative entropies.

\begin{theorem}[Selective
entanglement monotonicity]
\label{thm:main}Let $\boldsymbol{D}(\omega\Vert\tau)$ be a real-valued
function of $\omega\in\mathcal{S}(\mathcal{H})$ and $\tau\in\mathcal{L}%
_{+}(\mathcal{H})$ for which  the data-processing inequality holds:
\begin{equation*}
\boldsymbol{D}(\omega\Vert\tau)\geq\boldsymbol{D}(\mathcal{N}(\omega
)\Vert\mathcal{N}(\tau))
\end{equation*}
for $\mathcal{N\in}\operatorname{CPTP}(\mathcal{H},\mathcal{H}^{\prime})$, and
for which the following holds%
\begin{equation}
\boldsymbol{D}(\kappa\Vert\lambda)\geq D(p\Vert q)+\sum_{x}p(x)\boldsymbol{D}%
(\kappa_{x}\Vert\lambda_{x}), \label{eq:direct-sum-ineq-GD}%
\end{equation}
for classical--quantum $\kappa\in\mathcal{S}(\mathcal{H}_{XA})$ and
$\lambda\in\mathcal{L}_{+}(\mathcal{H}_{XA})$, as defined in
Lemma~\ref{lem:direct-sum-rel-ent-prop}. For $\rho\in\mathcal{S}%
(\mathcal{H}_{AB})$, define %
\begin{equation}
\boldsymbol{R}(\rho)\coloneqq\inf_{\sigma\in\mathcal{S}(\mathcal{H}_{AB})}
\left(\boldsymbol{D}(\rho\Vert\sigma) + \log_{2}\!\left\Vert \left(
\operatorname{id}\otimes T\right)  (\sigma)\right\Vert _{1}\right)
\label{eq:bold-d-rains}
\end{equation}
as
the $\boldsymbol{D}$-Rains relative entropy.
Then the $\boldsymbol{D}$-Rains relative entropy $\boldsymbol{R}(\rho)$ is a
selective PPT monotone; i.e., it satisfies%
\begin{equation*}
\boldsymbol{R}(\rho)\geq\sum_{x:p(x)>0}p(x)\boldsymbol{R}(\rho_{x}),
\end{equation*}
where%
\begin{equation*}
p(x)\coloneqq\operatorname{Tr}[\mathcal{P}_{x}(\rho)],\qquad\rho
_{x}\coloneqq\frac{\mathcal{P}_{x}(\rho)}{p(x)},
\end{equation*}
and $\{\mathcal{P}_{x}\}_{x}$ is a selective PPT operation (see
Definition~\ref{def:selective-PPT}).
\end{theorem}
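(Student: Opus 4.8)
The plan is to combine Lemma~\ref{lem:rains-identity} (the scaling property, which says $\boldsymbol{R}$ can be written equivalently as an infimum over $\operatorname{PPT}^{\prime}(\mathcal{H}_{AB})$ or as an infimum over all $\sigma\in\mathcal{L}_{+}$ with a trace-norm penalty) with the direct-sum inequality \eqref{eq:direct-sum-ineq-GD} and data processing. First I would fix an arbitrary $\sigma\in\operatorname{PPT}^{\prime}(\mathcal{H}_{AB})$ that is near-optimal for $\boldsymbol{R}(\rho)$, apply each map $\mathcal{P}_x$ to it, and set $\sigma_x\coloneqq\mathcal{P}_x(\sigma)/q(x)$ with $q(x)\coloneqq\operatorname{Tr}[\mathcal{P}_x(\sigma)]$, exactly as in the preliminaries. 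The key point flagged in the excerpt is that $\sigma_x$ need not lie in $\operatorname{PPT}^{\prime}(\mathcal{H}_{A'B'})$ — only $\|(\operatorname{id}\otimes T)(\sigma)\|_1\ge\sum_x q(x)\|(\operatorname{id}\otimes T)(\sigma_x)\|_1$ holds by \eqref{eq:negativity-monotone}, not that each individual negativity is $\le 1$. So instead of using $\sigma_x$ directly as a feasible point, I would use the penalized form \eqref{eq:rains-alt-alt-exp}: for each $x$ the (unnormalized) operator $\mathcal{P}_x(\sigma)$ is a legitimate candidate in the penalized infimum defining $\boldsymbol{R}(\rho_x)$, so $\boldsymbol{R}(\rho_x)\le \boldsymbol{D}(\rho_x\Vert \sigma_x)+\log_2\|(\operatorname{id}\otimes T)(\sigma_x)\|_1$ after absorbing the normalization constant $q(x)$ via the scaling property \eqref{eq:c-prop} applied to $\boldsymbol{D}$.

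Next I would build the classical--quantum pair needed to invoke \eqref{eq:direct-sum-ineq-GD}. Consider the instrument channel $\mathcal{N}$ on $\mathcal{H}_{AB}$ that sends an input $\xi$ to $\sum_x |x\rangle\!\langle x|_X\otimes\mathcal{P}_x(\xi)$; this is CPTP because $\sum_x\mathcal{P}_x$ is trace preserving. Set $\kappa\coloneqq\mathcal{N}(\rho)=\sum_x p(x)|x\rangle\!\langle x|\otimes\rho_x$ and $\lambda\coloneqq\mathcal{N}(\sigma)=\sum_x q(x)|x\rangle\!\langle x|\otimes\sigma_x$. By data processing for $\boldsymbol{D}$, $\boldsymbol{D}(\rho\Vert\sigma)\ge \boldsymbol{D}(\kappa\Vert\lambda)$, and by \eqref{eq:direct-sum-ineq-GD}, $\boldsymbol{D}(\kappa\Vert\lambda)\ge D(p\Vert q)+\sum_x p(x)\boldsymbol{D}(\rho_x\Vert\sigma_x)$. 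Then I would add $\log_2\|(\operatorname{id}\otimes T)(\sigma)\|_1\ge 0$ (it is $\le 1$, so its log is $\le 0$ — I want the inequality going the convenient way, so I use that $\boldsymbol{D}(\rho\Vert\sigma)\ge \boldsymbol{D}(\rho\Vert\sigma)+\log_2\|(\operatorname{id}\otimes T)(\sigma)\|_1$ already holds since the log term is non-positive, giving the cleanest bound on $\boldsymbol{R}(\rho)$).

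The heart of the argument is to show
\begin{equation*}
D(p\Vert q)+\sum_x p(x)\Bigl(\boldsymbol{D}(\rho_x\Vert\sigma_x)+\log_2\|(\operatorname{id}\otimes T)(\sigma_x)\|_1\Bigr)\ \ge\ \sum_x p(x)\boldsymbol{R}(\rho_x),
\end{equation*}
which follows term-by-term from \eqref{eq:rains-alt-alt-exp} applied to each $\rho_x$, provided the leftover $D(p\Vert q)+\sum_x p(x)\log_2\|(\operatorname{id}\otimes T)(\sigma_x)\|_1$ is non-negative. This is where \eqref{eq:negativity-monotone} enters: by the log-sum inequality (or concavity of $\log$), $\sum_x p(x)\log_2\frac{q(x)}{p(x)}+\sum_x p(x)\log_2\|(\operatorname{id}\otimes T)(\sigma_x)\|_1 = \sum_x p(x)\log_2\frac{q(x)\|(\operatorname{id}\otimes T)(\sigma_x)\|_1}{p(x)}\le \log_2\sum_x q(x)\|(\operatorname{id}\otimes T)(\sigma_x)\|_1\le \log_2\|(\operatorname{id}\otimes T)(\sigma)\|_1\le 0$, i.e. $-D(p\Vert q)+\sum_x p(x)\log_2\|(\operatorname{id}\otimes T)(\sigma_x)\|_1\le 0$, which rearranges to exactly the non-negativity of the leftover term. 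Chaining everything: $\boldsymbol{R}(\rho)+\varepsilon\ge \boldsymbol{D}(\rho\Vert\sigma)\ge D(p\Vert q)+\sum_x p(x)\boldsymbol{D}(\rho_x\Vert\sigma_x)\ge \sum_x p(x)\bigl(\boldsymbol{D}(\rho_x\Vert\sigma_x)+\log_2\|(\operatorname{id}\otimes T)(\sigma_x)\|_1\bigr)\ge\sum_x p(x)\boldsymbol{R}(\rho_x)$, and letting $\varepsilon\to 0$ finishes it. I expect the main obstacle to be precisely the bookkeeping around the non-$\operatorname{PPT}^{\prime}$-ness of $\sigma_x$ — getting the normalization constants $q(x)$ to cancel cleanly against $D(p\Vert q)$ via the scaling property, and making sure the $q(x)=0$ versus $p(x)=0$ edge cases (where some $\mathcal{P}_x$ annihilates $\sigma$ but not $\rho$, or vice versa) are handled, which is why the theorem statement restricts the sum to $x$ with $p(x)>0$ and why one should check $\operatorname{supp}(p)\subseteq\operatorname{supp}(q)$ does not fail (if it does, $D(p\Vert q)=+\infty$ and the bound is trivial).
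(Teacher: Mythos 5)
Your proposal is correct and follows essentially the same route as the paper's proof: the instrument channel $\sum_x|x\rangle\!\langle x|\otimes\mathcal{P}_x(\cdot)$, data processing, the direct-sum inequality \eqref{eq:direct-sum-ineq-GD}, the negativity monotonicity \eqref{eq:negativity-monotone}, and concavity of the logarithm to trade $D(p\Vert q)$ against the $\log_{2}\!\left\Vert \left(\operatorname{id}\otimes T\right)(\sigma_{x})\right\Vert_{1}$ terms. The only deviation is that by starting from a near-optimal $\sigma\in\operatorname{PPT}^{\prime}$ you implicitly invoke Lemma~\ref{lem:rains-identity} and hence the scaling property \eqref{eq:c-prop}, which is not among the hypotheses of Theorem~\ref{thm:main}; the paper instead takes an arbitrary $\sigma\in\mathcal{S}(\mathcal{H}_{AB})$, carries the penalty term $\log_{2}\!\left\Vert \left(\operatorname{id}\otimes T\right)(\sigma)\right\Vert_{1}$ through the whole chain, and only infimizes at the end against the definition \eqref{eq:bold-d-rains}, so no scaling assumption is needed --- and, relatedly, your appeal to \eqref{eq:c-prop} to ``absorb $q(x)$'' is unnecessary, since $\sigma_{x}$ is already a normalized state and therefore directly feasible in \eqref{eq:bold-d-rains}.
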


\begin{proof}
Consider the quantum channel%
\begin{equation*}
\mathcal{P}(\cdot)\coloneqq\sum_{x}|x\rangle\!\langle x|\otimes\mathcal{P}%
_{x}(\cdot).
\end{equation*}
Let $\sigma\in\mathcal{S}(\mathcal{H}_{AB})$, and let us define the
probability distribution $\{q(x)\}_{x}$ and set $\{\sigma_{x}\}_{x}$ 
with $\sigma_{x}\in\mathcal{S}(\mathcal{H}_{A^{\prime}B^{\prime}})$
as%
\begin{equation*}
q(x)\coloneqq\operatorname{Tr}[\mathcal{P}_{x}(\sigma)],\qquad\sigma
_{x}\coloneqq\frac{1}{q(x)}\mathcal{P}_{x}(\sigma).
\end{equation*}
From the
data-processing inequality for $\boldsymbol{D}$ and
\eqref{eq:direct-sum-ineq-GD}, we find 
\begin{align}
\boldsymbol{D}(\rho\Vert\sigma)  &  \geq\boldsymbol{D}(\mathcal{P}(\rho
)\Vert\mathcal{P}(\sigma))\nonumber\\
&  =\boldsymbol{D}\!\left(  \sum_{x}|x\rangle\!\langle x|\otimes
\mathcal{P}_{x}(\rho)\middle\Vert\sum_{x}|x\rangle\!\langle x|\otimes
\mathcal{P}_{x}(\sigma)\right) \nonumber\\
&  =\boldsymbol{D}\!\left(  \sum_{x}p(x)|x\rangle\!\langle x|\otimes\rho
_{x}\middle\Vert\sum_{x}q(x)|x\rangle\!\langle x|\otimes\sigma_{x}\right)
\nonumber\\
&  \geq D(p\Vert q)+\sum_{x}p(x)\boldsymbol{D}(\rho_{x}\Vert\sigma_{x}).
\end{align}
Putting this together with \eqref{eq:negativity-monotone} and defining
$\mathcal{X}^{+}\coloneqq\left\{  x:p(x)>0\right\}  $, we find that%
\begin{align}
&  \boldsymbol{D}(\rho\Vert\sigma)+\log_{2}\!\left\Vert \left(
\operatorname{id}\otimes T\right)  (\sigma)\right\Vert _{1}
\geq\sum_{x\in\mathcal{X}^{+}}p(x)\boldsymbol{D}(\rho_{x}\Vert\sigma
_{x})\nonumber\\
&  \qquad +D(p\Vert q) +\log_{2}\!\left(  \sum_{x}q(x)\left\Vert \left(  \operatorname{id}%
\otimes T\right)  (\sigma_{x})\right\Vert _{1}\right) \\
&  \geq\sum_{x\in\mathcal{X}^{+}}p(x)\boldsymbol{D}(\rho_{x}\Vert\sigma
_{x})+\sum_{x\in\mathcal{X}^{+}}p(x)\log_{2}\!\left\Vert \left(
\operatorname{id}\otimes T\right)  (\sigma_{x})\right\Vert _{1}\nonumber\\
&  
\geq\sum_{x\in\mathcal{X}^{+}}p(x)\boldsymbol{R}(\rho_{x}).\nonumber
\end{align}
The inequality follows from the definition in \eqref{eq:bold-d-rains}. The
second inequality follows because%
\begin{align}
&  D(p\Vert q)+\log_{2}\!\left(  \sum_{x}q(x)\left\Vert \left(
\operatorname{id}\otimes T\right)  (\sigma_{x})\right\Vert _{1}\right)
\nonumber\\
&  \geq D(p\Vert q)+\log_{2}\!\left(  \sum_{x\in\mathcal{X}^{+}}%
p(x)\frac{q(x)}{p(x)}\left\Vert \left(  \operatorname{id}\otimes T\right)
(\sigma_{x})\right\Vert _{1}\right) \nonumber\\
&  \geq D(p\Vert q)+\sum_{x\in\mathcal{X}^{+}}p(x)\log_{2}\!\left(
\frac{q(x)}{p(x)}\left\Vert \left(  \operatorname{id}\otimes T\right)
(\sigma_{x})\right\Vert _{1}\right) \nonumber\\
&  =\sum_{x\in\mathcal{X}^{+}}p(x)
\left(
\log_{2}\!\left(  \frac{p(x)}{q(x)}\right)
+
\log_{2}\!\left(  \frac{q(x)}%
{p(x)}\left\Vert \left(  \operatorname{id}\otimes T\right)  (\sigma
_{x})\right\Vert _{1}\right)\right) \nonumber\\
&  =\sum_{x\in\mathcal{X}^{+}}p(x)\log_{2}\!\left\Vert \left(
\operatorname{id}\otimes T\right)  (\sigma_{x})\right\Vert _{1},
\end{align}
where we have used the concavity of the logarithm. Since 
\begin{equation}
\boldsymbol{D}(\rho\Vert\sigma)+\log_{2}\!\left\Vert \left(  \operatorname{id}%
\otimes T\right)  (\sigma)\right\Vert _{1}\geq\sum_{x\in\mathcal{X}^{+}%
}p(x)\boldsymbol{R}(\rho_{x}) \label{eq:final-step-monotone-proof}%
\end{equation}
holds for arbitrary $\sigma\in\mathcal{S}(\mathcal{H}_{AB})$, we conclude 
by taking an infimum over every $\sigma\in\mathcal{S}%
(\mathcal{H}_{AB})$ and applying Lemma~\ref{lem:rains-identity}.
\end{proof}

\begin{corollary}[Rains relative entropies as selective
entanglement monotones]
\label{rem:selective-PPT-actual-rel-ents}The Rains relative entropies
constructed from the quantum relative entropy, and the Petz (for
$\alpha\in(1,2]$), geometric
(for $\alpha\in(1,2]$), and
sandwiched (for $\alpha>1$)
R\'{e}nyi relative entropies are selective PPT monotones. 
\end{corollary}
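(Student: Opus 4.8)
The plan is to derive the corollary directly from Theorem~\ref{thm:main}: it suffices to verify, for each of the four relative entropies in question, that it can play the role of $\boldsymbol{D}$ in that theorem, i.e., that on the claimed range of the R\'enyi parameter it satisfies both hypotheses — the data-processing inequality under $\operatorname{CPTP}$ maps, and the direct-sum inequality \eqref{eq:direct-sum-ineq-GD}. Once both are checked, Theorem~\ref{thm:main} immediately upgrades the corresponding $\boldsymbol{D}$-Rains relative entropy to a selective PPT monotone.

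First I would handle the standard case $\boldsymbol{D}=D$. Its data-processing inequality is exactly the first inequality in \eqref{eq:DP}, due to ref.~\cite{Lindblad1975}, and the direct-sum requirement follows from Lemma~\ref{lem:direct-sum-rel-ent-prop}: the identity stated there, together with $D(p\Vert q)\geq 0$, in particular implies the inequality form \eqref{eq:direct-sum-ineq-GD}. Hence Theorem~\ref{thm:main} applies with $\boldsymbol{R}=R$, giving that the Rains relative entropy is a selective PPT monotone.

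Next I would treat the sandwiched R\'enyi relative entropy $\widetilde{D}_\alpha$ for $\alpha>1$. Data processing for $\widetilde{D}_\alpha$ is recalled in \eqref{eq:DP} on the range $\alpha\in[1/2,1)\cup(1,\infty)$, which covers every $\alpha>1$, and the direct-sum inequality \eqref{eq:direct-sum-ineq-GD} is precisely the content of Lemma~\ref{lem:direct-sum-ineq-sand-renyi}, stated for $\alpha>1$. Applying Theorem~\ref{thm:main} then yields that $\widetilde{R}_\alpha$ is a selective PPT monotone for all $\alpha>1$. For the Petz and geometric R\'enyi relative entropies one proceeds identically, using the data-processing inequality noted just after \eqref{eq:DP} (which for these two families holds for $\alpha\in(0,1)\cup(1,2]$) together with the last sentence of Lemma~\ref{lem:direct-sum-ineq-sand-renyi}, which supplies \eqref{eq:direct-sum-ineq-GD} for $\alpha>1$.

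There is no genuine obstacle here: all the real work has already been done in Theorem~\ref{thm:main} and in the two direct-sum lemmas, and the corollary is a pure verification of hypotheses. The only point that needs care is the bookkeeping of admissible parameters — for each R\'enyi family one must intersect the $\alpha$-range on which data processing is available with the $\alpha$-range on which the direct-sum inequality has been established. For the sandwiched case this intersection is $(1,\infty)$, and for the Petz and geometric cases it is $(0,1)\cup(1,2]$ intersected with $(1,\infty)$, i.e.\ $(1,2]$, which is exactly the restriction appearing in the statement.
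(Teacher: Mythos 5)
Your proposal is correct and follows exactly the paper's (one-line) proof: the corollary is a pure verification that each relative entropy satisfies the data-processing and direct-sum hypotheses of Theorem~\ref{thm:main}, with the parameter ranges obtained by intersecting the ranges for which each property is known. One minor remark: for the standard relative entropy you do not need $D(p\Vert q)\geq 0$ at all, since the exact identity of Lemma~\ref{lem:direct-sum-rel-ent-prop} already implies the inequality \eqref{eq:direct-sum-ineq-GD} with equality.
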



\begin{proof}
All quantities satisfy the hypotheses of
Theorem~\ref{thm:main}.
\end{proof}


The proof of the following corollary, 
presented in Appendix~\ref{sec:proof-inv-class-comm}, justifies a claim made in ref.~\cite[Lemma~24]%
{KW17a}.

\begin{corollary}[Flags property]
\label{cor:inv-class-comm} The Rains relative entropy  possesses the \textquotedblleft
flags\textquotedblright\ property
\cite{HHHH09}.
That is, 
for a probability distribution
$\{p(x)\}_{x}$ and 
$\omega_{x}%
\in\mathcal{S}(\mathcal{H}_{AB})$,
let $\omega\in\mathcal{S}(\mathcal{H}_{XAB})$ be 
of the form%
\begin{equation*}
\omega=\sum_{x}p(x)|x\rangle\!\langle x|\otimes\omega_{x}.
\end{equation*}
Then,
for either of the bipartite cuts 
 $XA|B$ or $A|XB$,
\begin{equation*}
R(\omega)=\sum_{x}p(x)R(\omega_{x}).
\end{equation*}
\end{corollary}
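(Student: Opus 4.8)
The plan is to prove the two inequalities $R(\omega)\ge\sum_x p(x)\,R(\omega_x)$ and $R(\omega)\le\sum_x p(x)\,R(\omega_x)$ separately. I will write out the cut $XA|B$, in which the party holding the flag register $X$ is Alice; the cut $A|XB$ is entirely symmetric, with Bob holding $X$, and all computations below go through verbatim because $|x\rangle\!\langle x|$ is diagonal, hence invariant under transposition in the basis $\{|x\rangle\}_x$, so the relevant partial transposes stay block diagonal in $X$.

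For ``$\ge$'', I would apply Theorem~\ref{thm:main} with $\boldsymbol{D}=D$ --- which satisfies the hypotheses there, the required direct-sum inequality being in fact the equality of Lemma~\ref{lem:direct-sum-rel-ent-prop} --- to the selective operation in which Alice measures $X$ in the basis $\{|x\rangle\}_x$ and discards the outcome register, namely $\{\mathcal{P}_x\}_x$ with $\mathcal{P}_x(M)\coloneqq(\langle x|_X\otimes I_{AB})\,M\,(|x\rangle_X\otimes I_{AB})$. Each $\mathcal{P}_x$ is completely positive, $\sum_x\mathcal{P}_x=\operatorname{Tr}_X$ is trace preserving, and since $\mathcal{P}_x$ commutes with the partial transpose on $B$ (disjoint systems) while the partial transpose is an involution, one checks $(\operatorname{id}\otimes T)\circ\mathcal{P}_x\circ(\operatorname{id}\otimes T)=\mathcal{P}_x$, which is completely positive; hence $\{\mathcal{P}_x\}_x$ is a selective PPT operation in the sense of Definition~\ref{def:selective-PPT}. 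As $\mathcal{P}_x(\omega)=p(x)\,\omega_x$, the induced probabilities and post-measurement states are exactly $\{p(x)\}_x$ and $\{\omega_x\}_x$, so Theorem~\ref{thm:main} yields $R(\omega)\ge\sum_{x:p(x)>0}p(x)\,R(\omega_x)=\sum_x p(x)\,R(\omega_x)$.

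For ``$\le$'', I would assemble a feasible point for the program \eqref{eq:rains-rel-ent} defining $R(\omega)$ out of near-optimal witnesses for the individual states. Fix $\varepsilon>0$ and, for each $x$, choose $\sigma_x\in\operatorname{PPT}'(\mathcal{H}_{AB})$ with $D(\omega_x\Vert\sigma_x)\le R(\omega_x)+\varepsilon$; this is possible since $R(\omega_x)<\infty$ (the maximally mixed state is feasible for $\omega_x$). Set $\sigma\coloneqq\sum_x p(x)\,|x\rangle\!\langle x|_X\otimes\sigma_x\in\mathcal{S}(\mathcal{H}_{XAB})$. Then $(\operatorname{id}\otimes T)(\sigma)=\sum_x p(x)\,|x\rangle\!\langle x|\otimes(\operatorname{id}\otimes T)(\sigma_x)$ is block diagonal in $X$, so $\|(\operatorname{id}\otimes T)(\sigma)\|_1=\sum_x p(x)\,\|(\operatorname{id}\otimes T)(\sigma_x)\|_1\le\sum_x p(x)=1$, i.e.\ $\sigma\in\operatorname{PPT}'(\mathcal{H}_{XAB})$. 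Applying Lemma~\ref{lem:direct-sum-rel-ent-prop} with $q=p$, so the cross term $D(p\Vert q)$ vanishes, gives $R(\omega)\le D(\omega\Vert\sigma)=\sum_x p(x)\,D(\omega_x\Vert\sigma_x)\le\sum_x p(x)\,R(\omega_x)+\varepsilon$, and letting $\varepsilon\to0$ finishes this direction; combining the two inequalities gives the claimed equality (and, taking an infimum, also for the other cut).

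I do not anticipate a genuine difficulty here. The ``$\le$'' direction is routine gluing of near-optimal witnesses, and alternatively follows from convexity of $R$ together with the fact that a fixed local pure ancilla does not change $R$. All of the substance of ``$\ge$'' is already packaged inside Theorem~\ref{thm:main} (whose proof is where the $\operatorname{PPT}'$-normalization obstacle is circumvented via Lemma~\ref{lem:rains-identity}); the only point requiring a little care is the verification that the flag measurement $\{\mathcal{P}_x\}_x$ is a legitimate selective PPT operation and that its outputs are literally $\{\omega_x\}_x$, after which the corollary is immediate.
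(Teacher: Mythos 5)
Your proposal is correct, and its two halves relate differently to the paper's proof. For the direction $R(\omega)\geq\sum_x p(x)R(\omega_x)$ you do essentially what the paper does: invoke Theorem~\ref{thm:main} for the selective PPT operation that measures the flag register. The only cosmetic difference is that the paper uses the projections $\overline{\Delta}_x\otimes\operatorname{id}$, which keep $X$ in the output (so that an extra step, monotonicity of $R$ under the local partial trace over $X$, is needed to pass from $R(|x\rangle\!\langle x|\otimes\omega_x)$ to $R(\omega_x)$), whereas your $\mathcal{P}_x(M)=(\langle x|\otimes I)M(|x\rangle\otimes I)$ removes $X$ outright and lands on $\omega_x$ directly; your verification that this is a selective PPT operation for both cuts (using $T_X(|x\rangle\!\langle x|)=|x\rangle\!\langle x|$) is sound. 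For the direction $R(\omega)\leq\sum_x p(x)R(\omega_x)$ you take a genuinely different route: you glue near-optimal witnesses $\sigma_x\in\operatorname{PPT}'(\mathcal{H}_{AB})$ into the block-diagonal $\sigma=\sum_x p(x)|x\rangle\!\langle x|\otimes\sigma_x$, check feasibility via additivity of the trace norm over blocks, and apply the direct-sum equality of Lemma~\ref{lem:direct-sum-rel-ent-prop} with $q=p$. The paper instead gets this inequality in one line from convexity of $R$ together with monotonicity under the local channel $(\cdot)\mapsto|x\rangle\!\langle x|\otimes(\cdot)$ (i.e., \eqref{eq:PPT-monotone}), which is shorter given that both ingredients are already on record; your argument is more explicit and has the advantage of exhibiting a concrete feasible point for the program \eqref{eq:rains-rel-ent}, which makes the mechanism of the upper bound transparent. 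One trivial slip: your glued $\sigma$ need not lie in $\mathcal{S}(\mathcal{H}_{XAB})$ since the $\sigma_x$ are generally subnormalized, but only membership in $\operatorname{PPT}'(\mathcal{H}_{XAB})$ is used, and that you establish correctly.
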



\section{Probabilistic Approximate Distillation}

\label{sec:PADE}

In this final section, 
we turn our attention to an important application of Theorem~\ref{thm:main}. We prove that the Rains
relative entropies 
are upper bounds on the non-asymptotic
and asymptotic 
probabilistic approximate distillable entanglement (PADE). This
result strengthens the original result of ref.~\cite{R01} because the PADE cannot be smaller than the standard distillable entanglement
 \cite{HHHH09,KW20book}. 
The notion of PADE and 
such concepts 
in other resource theories has become increasingly relevant
\cite{RSTEDW18,FWLRA18,FL20,Regula2021,R21,KDS21}, due to the connection with
experiment and the fact that desired resource states are in practice generated
only approximately and with a certain probability heralded by a classical
signal.

\subsection{Non-asymptotic case}

Given a bipartite state $\rho\in\mathcal{S}(\mathcal{H}_{AB})$ and error
parameter $\varepsilon\in\left[  0,1\right]  $, we define the non-asymptotic
\emph{probabilistic approximate distillable entanglement (PADE)}
as%
\begin{multline}
E_{d}^{\varepsilon}(\rho)\coloneqq\\
\sup_{\substack{p\in\left[  0,1\right]  ,\mathcal{L},\\
d \in \mathbb{N}}}\left\{
\begin{array}
[c]{c}%
p\log_{2}d:\\
\mathcal{L}(\rho)=p|1\rangle\!\langle1|\otimes\widetilde{\Phi}^{d}+\left(
1-p\right)  |0\rangle\!\langle0|\otimes\sigma,\\
F(\widetilde{\Phi}^{d},\Phi^{d})\geq1-\varepsilon,\ \sigma\in\mathcal{S}(\mathcal{H}_{\hat{A}\hat{B}})
\end{array}
\right\}  ,\nonumber
\end{multline}
where the optimization is over $\mathcal{L}\in\operatorname{LOCC}%
(\mathcal{H}_{AB},\mathcal{H}_{X\hat{A}\hat{B}})$, the probability $p\in[0,1]$, and the dimension $d=\dim(\mathcal{H}_{\hat{A}})=\dim(\mathcal{H}_{\hat{B}})$. Also, $F(\omega,\tau
)\coloneqq\left\Vert \sqrt{\omega}\sqrt{\tau}\right\Vert _{1}^{2}$ is the
fidelity of states $\omega,\tau\in\mathcal{S}(\mathcal{H})$ \cite{U76}, and
$\Phi^{d}\coloneqq|\Phi^{d}\rangle\!\langle\Phi^{d}|$
is the maximally entangled state, 
\begin{equation}
\qquad|\Phi^{d}%
\rangle\coloneqq d^{-1/2}\sum_{i}|i\rangle|i\rangle,
\label{eq:max-ent-def}%
\end{equation}
with $\{|i\rangle\}_{i}$ an orthonormal basis. The interpretation of
$E_{d}^{\varepsilon}(\rho)$ is that it is equal to the expected number of
$\varepsilon$-approximate e-bits that can be generated from the state $\rho$
by means of LOCC, where an e-bit is $\Phi^{2}$. The standard non-asymptotic
distillable entanglement of $\rho$ is defined in the same way as above,
however, with the exception that there is no optimization over $p$, and it is
instead simply set to $p=1$ \cite{BD10a}.

\begin{theorem}[Non-asymptotic upper bound]
\label{thm:non-asymp-ED-upper-bnd}For $\varepsilon\in\lbrack0,1)$ and $\rho
\in\mathcal{S}(\mathcal{H}_{AB})$, the following upper bound holds for all
$\alpha>1$:%
\begin{equation}
E_{d}^{\varepsilon}(\rho)\leq\widetilde{R}_{\alpha}(\rho)+\frac{\alpha}%
{\alpha-1}\log_{2}\!\left(  \frac{1}{1-\varepsilon}\right)  .
\label{eq:ed-up-bnd}%
\end{equation}

\end{theorem}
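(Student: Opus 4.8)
The plan is to combine the selective PPT monotonicity of $\widetilde{R}_\alpha$ (Corollary~\ref{rem:selective-PPT-actual-rel-ents}, valid for $\alpha>1$) with an explicit lower bound on the sandwiched Rains relative entropy of any state having large overlap with $\Phi^d$. Fix $\alpha>1$ and a feasible tuple $(\mathcal{L},p,d)$ for $E_d^\varepsilon(\rho)$, so that $\mathcal{L}(\rho)=p\,|1\rangle\!\langle1|\otimes\widetilde{\Phi}^d+(1-p)\,|0\rangle\!\langle0|\otimes\sigma$ with $F(\widetilde{\Phi}^d,\Phi^d)\geq1-\varepsilon$. Define the completely positive maps $\mathcal{L}_x(\cdot)\coloneqq\langle x|_X\,\mathcal{L}(\cdot)\,|x\rangle_X$ for $x\in\{0,1\}$. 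Since $\mathcal{L}$ is LOCC (hence completely PPT-preserving) and the partial transpose does not touch the flag register $X$, the collection $\{\mathcal{L}_0,\mathcal{L}_1\}$ is a selective LOCC operation, hence a selective PPT operation in the sense of Definition~\ref{def:selective-PPT}, with $\operatorname{Tr}[\mathcal{L}_1(\rho)]=p$, $\mathcal{L}_1(\rho)/p=\widetilde{\Phi}^d$, $\operatorname{Tr}[\mathcal{L}_0(\rho)]=1-p$, and $\mathcal{L}_0(\rho)/(1-p)=\sigma$. Applying Theorem~\ref{thm:main} with $\boldsymbol{D}=\widetilde{D}_\alpha$ (whose hypotheses hold for $\alpha>1$ by the data-processing inequality and Lemma~\ref{lem:direct-sum-ineq-sand-renyi}) then yields
\begin{equation*}
\widetilde{R}_\alpha(\rho)\ \geq\ p\,\widetilde{R}_\alpha(\widetilde{\Phi}^d)+(1-p)\,\widetilde{R}_\alpha(\sigma)\ \geq\ p\,\widetilde{R}_\alpha(\widetilde{\Phi}^d),
\end{equation*}
where the last inequality uses $\widetilde{R}_\alpha\geq0$: applying the data-processing inequality under the trace gives $\widetilde{D}_\alpha(\rho\Vert\sigma')\geq-\log_2\operatorname{Tr}[\sigma']\geq0$ for every $\sigma'\in\operatorname{PPT}'$, since such $\sigma'$ has trace at most one. (If $p\in\{0,1\}$ the claim is trivial from $\widetilde{R}_\alpha\geq0$, and Theorem~\ref{thm:main} already restricts to outcomes of positive probability.)

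Next I would lower bound $\widetilde{R}_\alpha(\widetilde{\Phi}^d)$. For any $\sigma'\in\operatorname{PPT}'(\mathcal{H}_{\hat{A}\hat{B}})$, apply the data-processing inequality for $\widetilde{D}_\alpha$ to the binary measurement channel built from the test $\{\Phi^d,\ I-\Phi^d\}$ and then discard all but the first outcome in the resulting classical sandwiched R\'enyi divergence; for $\alpha>1$ this gives
\begin{equation*}
\widetilde{D}_\alpha(\widetilde{\Phi}^d\Vert\sigma')\ \geq\ \frac{\alpha}{\alpha-1}\log_2\!\operatorname{Tr}[\Phi^d\widetilde{\Phi}^d]\ -\ \log_2\!\operatorname{Tr}[\Phi^d\sigma'].
\end{equation*}
Here $\operatorname{Tr}[\Phi^d\widetilde{\Phi}^d]=F(\widetilde{\Phi}^d,\Phi^d)\geq1-\varepsilon$, since the fidelity with the pure state $\Phi^d$ is the overlap. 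For the second term I would invoke the Rains bound $\operatorname{Tr}[\Phi^d\sigma']\leq1/d$ for $\sigma'\in\operatorname{PPT}'$: because $(\operatorname{id}\otimes T)(\Phi^d)=\frac1d W$ with $W$ the swap operator, one has $\operatorname{Tr}[\Phi^d\sigma']=\frac1d\operatorname{Tr}[W\,(\operatorname{id}\otimes T)(\sigma')]\leq\frac1d\|W\|_\infty\,\|(\operatorname{id}\otimes T)(\sigma')\|_1\leq\frac1d$. As $\varepsilon<1$ the first term is finite, and taking an infimum over $\sigma'\in\operatorname{PPT}'$ produces $\widetilde{R}_\alpha(\widetilde{\Phi}^d)\geq\log_2 d-\frac{\alpha}{\alpha-1}\log_2\!\big(\tfrac{1}{1-\varepsilon}\big)$.

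Combining the two displayed bounds gives $p\log_2 d\leq\widetilde{R}_\alpha(\rho)+p\,\frac{\alpha}{\alpha-1}\log_2\!\big(\tfrac{1}{1-\varepsilon}\big)\leq\widetilde{R}_\alpha(\rho)+\frac{\alpha}{\alpha-1}\log_2\!\big(\tfrac{1}{1-\varepsilon}\big)$, using $p\leq1$ and the non-negativity of $\frac{\alpha}{\alpha-1}\log_2\!\big(\tfrac{1}{1-\varepsilon}\big)$ for $\alpha>1$ and $\varepsilon\in[0,1)$. Taking the supremum over all feasible $(\mathcal{L},p,d)$ then yields \eqref{eq:ed-up-bnd}. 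The genuinely new ingredient is the first step: recognizing the flagged LOCC channel as a selective PPT operation and invoking Theorem~\ref{thm:main}; the remaining pieces — the measured–R\'enyi lower bound and the overlap bound $\operatorname{Tr}[\Phi^d\sigma']\leq1/d$ — are standard, with the only small care being to drop the probabilistic prefactor on the error term at the very end.
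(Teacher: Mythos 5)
Your proposal is correct and follows essentially the same route as the paper: selective PPT monotonicity (Theorem~\ref{thm:main} applied to the flagged LOCC channel viewed as the instrument $\{\mathcal{L}_0,\mathcal{L}_1\}$), followed by the lower bound $\widetilde{R}_{\alpha}(\widetilde{\Phi}^{d})\geq\log_{2}d-\frac{\alpha}{\alpha-1}\log_{2}\bigl(\frac{1}{1-\varepsilon}\bigr)$, which you derive exactly as in the paper's Lemma~\ref{lem:rains-ineq-approx-max-ent} (first proof) via the binary test $\{\Phi^{d},I-\Phi^{d}\}$ and the bound $\operatorname{Tr}[\Phi^{d}\sigma']\leq1/d$. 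Your explicit construction of the selective operation from the flag register is a detail the paper leaves implicit, but the argument is the same.
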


\begin{proof}
Let $\mathcal{L}\in\operatorname{LOCC}(\mathcal{H}_{AB},\mathcal{H}_{X\hat
{A}\hat{B}})$ and $p\in\left[  0,1\right]  $ be the elements of an arbitrary
non-asymptotic PADE\ protocol. For $\alpha>1$, we find that%
\begin{align}
\widetilde{R}_{\alpha}(\rho)  &  \geq p\widetilde{R}_{\alpha}(\widetilde{\Phi
}^{d})+\left(  1-p\right)  \widetilde{R}_{\alpha}(\sigma)\nonumber
 \geq p\widetilde{R}_{\alpha}(\widetilde{\Phi}^{d})\nonumber\\
&  \geq p\left[  \frac{\alpha}{\alpha-1}\log_{2}(1-\varepsilon)+\log
_{2}d\right] \nonumber\\
&  \geq\frac{\alpha}{\alpha-1}\log_{2}(1-\varepsilon)+p\log_{2}d,
\end{align}
where the first inequality follows from applying Theorem~\ref{thm:main} and
Corollary~\ref{rem:selective-PPT-actual-rel-ents}\ to $\mathcal{L}$ and $\rho
$. The second inequality follows because $\widetilde{R}_{\alpha}(\sigma)\geq
0$, and the third inequality follows from applying
Lemma~\ref{lem:rains-ineq-approx-max-ent} in
Appendix~\ref{sec:rains-up-bnd-approx-ebits}. Rewriting the inequality
$\widetilde{R}_{\alpha}(\rho)\geq\frac{\alpha}{\alpha-1}\log_{2}%
(1-\varepsilon)+p\log_{2}d$, we find, for all $\alpha>1$,
\begin{equation*}
p\log_{2}d\leq\widetilde{R}_{\alpha}(\rho)+\frac{\alpha}{\alpha-1}\log
_{2}\!\left(  \frac{1}{1-\varepsilon}\right)  .
\end{equation*}
Since the upper bound depends only on $\rho$, $\varepsilon$, and $\alpha$ and
it holds for all $\mathcal{L}\in\operatorname{LOCC}(\mathcal{H}_{AB}%
,\mathcal{H}_{X\hat{A}\hat{B}})$ and $p\in\left[  0,1\right]  $, we conclude
\eqref{eq:ed-up-bnd}\ after applying the definition of $E_{d}^{\varepsilon
}(\rho)$.
\end{proof}

\subsection{Asymptotic case}

We define the asymptotic PADE as 
\begin{equation}
E_{d}(\rho)\coloneqq\inf_{\varepsilon\in\left(  0,1\right)  }\liminf
_{n\rightarrow\infty}\frac{1}{n}E_{d}^{\varepsilon}(\rho^{\otimes n}),
\label{eq:asymp-PADE}%
\end{equation}
as well as the strong converse PADE as%
\begin{equation}
\widetilde{E}_{d}(\rho)\coloneqq\sup_{\varepsilon\in\left(  0,1\right)
}\limsup_{n\rightarrow\infty}\frac{1}{n}E_{d}^{\varepsilon}(\rho^{\otimes n}).
\label{eq:strong-conv-PADE}%
\end{equation}
The inequality $E_{d}(\rho)\leq\widetilde{E}_{d}(\rho)$ is obvious from the definitions.

\begin{theorem}[Strong converse]
\label{thm:strong-conv-Rains} For $\rho\in\mathcal{S}(\mathcal{H}_{AB})$, one finds 
$
\widetilde{E}_{d}(\rho)\leq R(\rho)$.

\end{theorem}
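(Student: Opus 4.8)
The plan is to reduce the asymptotic statement to the non-asymptotic bound of Theorem~\ref{thm:non-asymp-ED-upper-bnd} applied to the tensor power $\rho^{\otimes n}$, then invoke subadditivity of the sandwiched Rains relative entropy and take limits carefully. Concretely, fix $\alpha>1$ and $\varepsilon\in(0,1)$. Applying \eqref{eq:ed-up-bnd} to the state $\rho^{\otimes n}$ on the cut $A^{n}|B^{n}$ gives
\begin{equation*}
E_{d}^{\varepsilon}(\rho^{\otimes n})\leq\widetilde{R}_{\alpha}(\rho^{\otimes n})+\frac{\alpha}{\alpha-1}\log_{2}\!\left(\frac{1}{1-\varepsilon}\right).
\end{equation*}
By the subadditivity property \eqref{eq:Rains-subadditive}, iterated $n$ times, we have $\widetilde{R}_{\alpha}(\rho^{\otimes n})\leq n\widetilde{R}_{\alpha}(\rho)$. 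Dividing by $n$,
\begin{equation*}
\frac{1}{n}E_{d}^{\varepsilon}(\rho^{\otimes n})\leq\widetilde{R}_{\alpha}(\rho)+\frac{1}{n}\cdot\frac{\alpha}{\alpha-1}\log_{2}\!\left(\frac{1}{1-\varepsilon}\right).
\end{equation*}

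Next I would take $\limsup_{n\to\infty}$ on both sides; the $\varepsilon$-dependent term vanishes since $\alpha>1$ is fixed and $\varepsilon$ is fixed, leaving $\limsup_{n\to\infty}\frac{1}{n}E_{d}^{\varepsilon}(\rho^{\otimes n})\leq\widetilde{R}_{\alpha}(\rho)$. Since this holds for every $\varepsilon\in(0,1)$, taking the supremum over $\varepsilon$ yields $\widetilde{E}_{d}(\rho)\leq\widetilde{R}_{\alpha}(\rho)$, and this is valid for every $\alpha>1$. Finally I would take the infimum over $\alpha>1$ (equivalently, the limit $\alpha\to1^{+}$) and use the convergence $\lim_{\alpha\to1}\widetilde{R}_{\alpha}(\rho)=R(\rho)$ from \eqref{eq:rains-a-limit-1} to conclude $\widetilde{E}_{d}(\rho)\leq R(\rho)$.

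The main obstacle I anticipate is the interchange of limits at the last step: one must make sure that $\widetilde{R}_{\alpha}(\rho)$ is monotone (or at least that its infimum over $\alpha>1$ coincides with its limit as $\alpha\to1^{+}$). Since the sandwiched Rényi relative entropy is monotone non-decreasing in $\alpha$, the Rains variant $\widetilde{R}_{\alpha}$ inherits this monotonicity (an infimum of non-decreasing functions of $\alpha$ is non-decreasing), so $\inf_{\alpha>1}\widetilde{R}_{\alpha}(\rho)=\lim_{\alpha\to1^{+}}\widetilde{R}_{\alpha}(\rho)=R(\rho)$; the bound $\widetilde{E}_{d}(\rho)\leq\widetilde{R}_{\alpha}(\rho)$ holding for all $\alpha>1$ then passes to the infimum immediately, with no exchange of $n\to\infty$ and $\alpha\to1$ actually required. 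A minor point to check is that Theorem~\ref{thm:non-asymp-ED-upper-bnd} was stated for $\varepsilon\in[0,1)$ and an LOCC protocol on an arbitrary bipartite state, so it applies verbatim to $\rho^{\otimes n}$ with the stated cut; no further work is needed there.
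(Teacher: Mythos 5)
Your proposal is correct and follows essentially the same route as the paper: apply Theorem~\ref{thm:non-asymp-ED-upper-bnd} to $\rho^{\otimes n}$, invoke subadditivity \eqref{eq:Rains-subadditive}, take $n\to\infty$, and then pass to $\alpha\to 1^{+}$ via \eqref{eq:rains-a-limit-1}; the only difference is that you order the $\varepsilon$-supremum before the $\alpha$-limit and justify the latter via monotonicity of $\widetilde{R}_{\alpha}$ in $\alpha$, whereas the paper simply invokes the limit \eqref{eq:rains-a-limit-1} and then notes the bound holds for all $\varepsilon$ -- both orderings are valid since the intermediate bound holds for each $(\varepsilon,\alpha)$ pair independently.
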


\begin{proof}
By applying Theorem~\ref{thm:non-asymp-ED-upper-bnd}, we find that the
following upper bound holds for all $\varepsilon\in\lbrack0,1)$ and $\alpha
>1$.%
\begin{align}
\frac{1}{n}E_{d}^{\varepsilon}(\rho^{\otimes n})  &  \leq\frac{1}{n}%
\widetilde{R}_{\alpha}(\rho^{\otimes n})+\frac{\alpha}{n(\alpha-1)}\log
_{2}\!\left(  \frac{1}{1-\varepsilon}\right) \\
&  \leq\widetilde{R}_{\alpha}(\rho)+\frac{\alpha}{n(\alpha-1)}\log
_{2}\!\left(  \frac{1}{1-\varepsilon}\right)  ,
\end{align}
where we have used subadditivity of the sandwiched Rains relative entropy 
\eqref{eq:Rains-subadditive}. Taking the limit as $n\rightarrow\infty$, we find that the following holds for all
$\varepsilon\in\lbrack0,1)$ and $\alpha>1$
$
\limsup_{n\rightarrow\infty}\frac{1}{n}E_{d}^{\varepsilon}(\rho^{\otimes
n})\leq\widetilde{R}_{\alpha}(\rho).
$
The limit
$\alpha\rightarrow1$ yields
\begin{equation}
\limsup_{n\rightarrow\infty}\frac{1}{n}E_{d}^{\varepsilon}(\rho^{\otimes
n})\leq R(\rho), \label{eq:key-ineq-str-conv-rains}%
\end{equation}
where we have used \eqref{eq:rains-a-limit-1}. Then the desired inequality follows
because the above bound holds for all $\varepsilon\in\left(  0,1\right)  $.
\end{proof}

We can regularize \cite{PhysRevLett.87.217902} the bound above to make it even tighter, with the proof being presented in  Appendix~\ref{sec:reg-Rains-up-bnd}.
In Appendix~\ref{sec:weak-conv-bnd}, we provide a non-asymptotic bound
which leads
to a weak-converse upper bound on the PADE. 

\begin{corollary}[Regularized Rains bound]
\label{cor:reg-rains-bnd} For $\rho\in\mathcal{S}(\mathcal{H}_{AB})$, 
\begin{equation*}
\widetilde{E}_{d}(\rho)\leq\inf_{\ell\in\mathbb{N}}\frac{1}{\ell}%
R(\rho^{\otimes\ell}).
\end{equation*}

\end{corollary}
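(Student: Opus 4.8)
\textbf{Proof proposal for Corollary~\ref{cor:reg-rains-bnd}.}

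The plan is to leverage Theorem~\ref{thm:strong-conv-Rains} together with a ``blocking'' argument that replaces the single-copy state $\rho$ by the blocked state $\rho^{\otimes\ell}$. First I would fix an arbitrary $\ell\in\mathbb{N}$ and regard $\ell$ uses of the i.i.d.\ source as a single use of the source $\rho^{\otimes\ell}$ on the bipartite cut $A^{\ell}|B^{\ell}$. Concretely, for $n$ a multiple of $\ell$, say $n=m\ell$, any PADE protocol acting on $\rho^{\otimes n}$ can be viewed as a PADE protocol acting on $(\rho^{\otimes\ell})^{\otimes m}$, and conversely; hence $E_{d}^{\varepsilon}(\rho^{\otimes m\ell})=E_{d}^{\varepsilon}((\rho^{\otimes\ell})^{\otimes m})$.

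Next I would apply Theorem~\ref{thm:strong-conv-Rains} to the state $\rho^{\otimes\ell}$ in place of $\rho$. That theorem gives, for all $\varepsilon\in(0,1)$,
\begin{equation*}
\limsup_{m\rightarrow\infty}\frac{1}{m}E_{d}^{\varepsilon}\!\left((\rho^{\otimes\ell})^{\otimes m}\right)\leq R(\rho^{\otimes\ell}).
\end{equation*}
Dividing by $\ell$ and using $n=m\ell$, the left-hand side becomes $\limsup_{m\to\infty}\frac{1}{m\ell}E_{d}^{\varepsilon}(\rho^{\otimes m\ell})$, i.e.\ the limit superior of $\frac{1}{n}E_{d}^{\varepsilon}(\rho^{\otimes n})$ along the subsequence $n\in\ell\mathbb{N}$. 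The one genuinely nontrivial point is to pass from this subsequential limit superior to the full limit superior over all $n$: I would handle the remaining residue $n=m\ell+r$ with $0\le r<\ell$ by a monotonicity/continuity estimate showing $\frac{1}{n}E_{d}^{\varepsilon}(\rho^{\otimes n})$ differs from $\frac{1}{m\ell}E_{d}^{\varepsilon}(\rho^{\otimes m\ell})$ by a term that vanishes as $n\to\infty$ --- for instance, by noting that appending $r$ extra copies of $\rho$ and tracing them out is an LOCC operation, so $E_{d}^{\varepsilon}(\rho^{\otimes n})\ge E_{d}^{\varepsilon}(\rho^{\otimes m\ell})$, while an upper bound in the other direction follows since $E_{d}^{\varepsilon}(\rho^{\otimes n})\le E_{d}^{\varepsilon}(\rho^{\otimes(m+1)\ell})$, and $\frac{1}{n}$ versus $\frac{1}{(m+1)\ell}$ differ by a factor tending to $1$. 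This yields $\limsup_{n\rightarrow\infty}\frac{1}{n}E_{d}^{\varepsilon}(\rho^{\otimes n})\leq\frac{1}{\ell}R(\rho^{\otimes\ell})$ for every $\varepsilon\in(0,1)$.

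Finally I would take the supremum over $\varepsilon\in(0,1)$ on the left (which only touches the $\varepsilon$-dependence, already absorbed) to obtain $\widetilde{E}_{d}(\rho)\leq\frac{1}{\ell}R(\rho^{\otimes\ell})$, and then take the infimum over $\ell\in\mathbb{N}$ on the right, giving the claimed bound $\widetilde{E}_{d}(\rho)\leq\inf_{\ell\in\mathbb{N}}\frac{1}{\ell}R(\rho^{\otimes\ell})$. The main obstacle I anticipate is the bookkeeping around the non-multiples of $\ell$ in the subsequence argument; subadditivity of $R$ (inequality \eqref{eq:Rains-subadditive}) could alternatively be used to control $R(\rho^{\otimes n})$ in terms of $R(\rho^{\otimes\ell})$ and a bounded-norm remainder, but the cleanest route is the LOCC-monotonicity sandwich just described, since $E_{d}^{\varepsilon}$ is itself LOCC monotone by construction. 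Everything else is a direct invocation of Theorem~\ref{thm:strong-conv-Rains} with $\rho$ replaced by $\rho^{\otimes\ell}$.
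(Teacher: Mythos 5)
Your proposal is correct and follows essentially the same route as the paper: apply Theorem~\ref{thm:strong-conv-Rains} to the blocked state $\rho^{\otimes\ell}$ and reduce the full $\limsup$ to the subsequence $n\in\ell\mathbb{N}$ via monotonicity of $E_{d}^{\varepsilon}$ under discarding extra copies. The only cosmetic difference is that the paper packages the subsequence step as an exact regularization identity $\widetilde{E}_{d}(\rho)=\frac{1}{\ell}\widetilde{E}_{d}(\rho^{\otimes\ell})$ (Theorem~\ref{thm:ed-reg-form}, controlling the residue additively via the dimension bound of Lemma~\ref{lem:dim-bnd-ed}), whereas you prove only the needed inequality direction with a multiplicative factor tending to one; both are sound.
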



\begin{remark}[Tighter bounds]
Due to the subadditivity of the Rains relative entropy  
\eqref{eq:Rains-subadditive}, the sequence $\{\frac{1}{\ell}R(\rho
^{\otimes\ell})\}_{\ell\in\mathbb{N}}$ is monotone decreasing, so that one
obtains upper bounds that become tighter with increasing $\ell$.
\end{remark}


In summary, we have proven that various Rains relative entropies are selective PPT entanglement monotones and have established this quantity as
a meaningful conservative entanglement measure. We have also shown that they give upper bounds on the probabilistic approximate distillable entanglement, in both the non-asymptotic and asymptotic settings.

{\footnotesize
Acknowledgements: This work has been initiated by
J.~E.~answering a technical question put forth
by M.~M.~W.~on Twitter. We thank Kr{\"u}mel for proofreading.
J.~E.~acknowledges support from the 
BMBF (QR.X) and the Einstein Research Unit on Quantum Devices. M.~M.~W.~acknowledges support from 
NSF Grant No.~1907615.}

\bibliographystyle{IEEEtran}
\bibliography{RefNew}

\newpage

\appendices

\section{Proof of Lemma~\ref{lem:direct-sum-ineq-sand-renyi}}

\label{sec:proof-direct-sum-ineq-sand-ren}

The inequality in \eqref{eq:direct-sum-ineq-sand-renyi} can be understood as a
rewriting of  ref.~\cite[Lemma~3]{WW19alog}. We give a full proof here for
completeness. Consider that%
\begin{align}
&  \widetilde{D}_{\alpha}(\kappa\Vert\lambda)\nonumber\\
&  =\frac{1}{\alpha-1}\log_{2}\!\left(  \sum_{x}p^{\alpha}(x)q^{1-\alpha
}(x)\widetilde{Q}_{\alpha}(\kappa_{x}\Vert\lambda_{x})\right) \nonumber\\
&  =\frac{1}{\alpha-1}\log_{2}\!\left(  \sum_{x}p(x)p^{\alpha-1}%
(x)q^{1-\alpha}(x)\widetilde{Q}_{\alpha}(\kappa_{x}\Vert\lambda_{x})\right)
\nonumber\\
&  =\frac{1}{\alpha-1}\log_{2}\!\left(  \sum_{x}p(x)\left(  \frac{p(x)}%
{q(x)}\right)  ^{\alpha-1}\widetilde{Q}_{\alpha}(\kappa_{x}\Vert\lambda
_{x})\right) \nonumber\\
&  \geq\sum_{x}p(x)\left[  \frac{1}{\alpha-1}\log_{2}\!\left(  \left(
\frac{p(x)}{q(x)}\right)  ^{\alpha-1}\widetilde{Q}_{\alpha}(\kappa_{x}%
\Vert\lambda_{x})\right)  \right] \nonumber\\
&  =\sum_{x}p(x)\left[  \log_{2}\!\left(  \frac{p(x)}{q(x)}\right)  +\frac
{1}{\alpha-1}\log_{2}\widetilde{Q}_{\alpha}(\kappa_{x}\Vert\lambda_{x})\right]
\nonumber\\
&  =D(p\Vert q)+\sum_{x}p(x)\widetilde{D}_{\alpha}(\kappa_{x}\Vert\lambda
_{x}).
\end{align}
The sole inequality above follows from concavity of the logarithm. The proof
of the same inequality for the Petz and geometric Renyi relative entropies
for $\alpha>1$ follows the same approach.

\section{Proof of Lemma~\ref{lem:rains-identity}}

\label{sec:proof-rains-rewrite}

Our goal is to prove the equality
\begin{multline}
\inf_{\sigma\in\operatorname{PPT}^{\prime
}(\mathcal{H}_{AB})}\mathbb{D}(\rho\Vert\sigma) 
\\
=
    \inf_{\sigma\in\mathcal{S}(\mathcal{H}_{AB})}\left(
\mathbb{D}(\rho\Vert\sigma)+\log_{2}\!\left\Vert \left(  \operatorname{id}%
\otimes T\right)  \left(  \sigma\right)  \right\Vert _{1}\right)
\label{eq:Rains-identity-app}
\end{multline}
To see the inequality $\leq$ in \eqref{eq:Rains-identity-app}, let $\sigma
\in\mathcal{S}(\mathcal{H}_{AB})$, and consider that
\begin{align}
 \inf_{\sigma\in\operatorname{PPT}^{\prime
}(\mathcal{H}_{AB})}\mathbb{D}(\rho\Vert\sigma)    &  \leq\mathbb{D}(\rho\Vert\sigma/\left\Vert \left(
\operatorname{id}\otimes T\right)  (\sigma)\right\Vert _{1})\nonumber\\
&  =\mathbb{D}(\rho\Vert\sigma)+\log_{2}\!\left\Vert \left(  \operatorname{id}%
\otimes T\right)  (\sigma)\right\Vert _{1},
\end{align}
which follows because $\sigma/\left\Vert \left(  \operatorname{id}\otimes
T\right)  (\sigma)\right\Vert _{1}\in\operatorname{PPT}^{\prime}%
(\mathcal{H}_{AB})$ and by applying \eqref{eq:c-prop}. Since the inequality
holds for all $\sigma\in\mathcal{S}(\mathcal{H}_{AB})$, we conclude that
\begin{multline}
\inf_{\sigma\in\operatorname{PPT}^{\prime
}(\mathcal{H}_{AB})}\mathbb{D}(\rho\Vert\sigma) 
\leq\\
\inf_{\sigma\in\mathcal{S}(\mathcal{H}_{AB})}\left[
\mathbb{D}(\rho\Vert\sigma)+\log_{2}\!\left\Vert \left(  \operatorname{id}%
\otimes T\right)  (\sigma)\right\Vert _{1}\right]  . \label{eq:one-way-for-id}%
\end{multline}
For the other inequality $\geq$ in \eqref{eq:rains-alt-exp}, let $\sigma
\in\operatorname{PPT}^{\prime}(\mathcal{H}_{AB})$. Then consider that%
\begin{align}
\mathbb{D}(\rho\Vert\sigma)  &  =\mathbb{D}(\rho\Vert\sigma/\operatorname{Tr}%
[\sigma])-\log_{2}\operatorname{Tr}[\sigma]\nonumber\\
&  =\mathbb{D}(\rho\Vert\sigma/\operatorname{Tr}[\sigma])-\log_{2}%
\operatorname{Tr}[\sigma]\nonumber\\
&  \qquad+\log_{2}\!\left\Vert \left(  \operatorname{id}\otimes T\right)
(\sigma)\right\Vert _{1}-\log_{2}\!\left\Vert \left(  \operatorname{id}\otimes
T\right)  (\sigma)\right\Vert _{1}\nonumber\\
&  =\mathbb{D}(\rho\Vert\sigma/\operatorname{Tr}[\sigma])+\log_{2}\!\left\Vert
\left(  \operatorname{id}\otimes T\right)  \!\left(  \frac{\sigma
}{\operatorname{Tr}[\sigma]}\right)  \right\Vert _{1}\nonumber\\
&  \qquad-\log_{2}\!\left\Vert \left(  \operatorname{id}\otimes T\right)
(\sigma)\right\Vert _{1}\nonumber\\
&  \geq\mathbb{D}(\rho\Vert\sigma/\operatorname{Tr}[\sigma])+\log
_{2}\left\Vert \left(  \operatorname{id}\otimes T\right)  \!\left(
\frac{\sigma}{\operatorname{Tr}[\sigma]}\right)  \right\Vert _{1}\nonumber\\
&  \geq\inf_{\omega\in\mathcal{S}(\mathcal{H}_{AB})}\left[  \mathbb{D}%
(\rho\Vert\omega)+\log_{2}\!\left\Vert \left(  \operatorname{id}\otimes
T\right)  (\omega)\right\Vert _{1}\right]  .
\end{align}
The first equality follows from \eqref{eq:c-prop} and the first inequality
from the fact that $\left\Vert \left(  \operatorname{id}\otimes T\right)
(\sigma)\right\Vert _{1}\leq1$. Since the inequality holds for arbitrary
$\sigma\in\operatorname{PPT}^{\prime}(\mathcal{H}_{AB})$, we conclude that%
\begin{multline}
\inf_{\sigma\in\operatorname{PPT}^{\prime}(\mathcal{H}_{AB}%
)}\mathbb{D}(\rho\Vert\sigma)\label{eq:other-way-for-id}\\
\geq\inf_{\omega\in\mathcal{S}(\mathcal{H}_{AB})}\left[  \mathbb{D}(\rho
\Vert\omega)+\log_{2}\!\left\Vert \left(  \operatorname{id}\otimes T\right)
(\omega)\right\Vert _{1}\right]  .
\end{multline}
Putting together \eqref{eq:one-way-for-id} and \eqref{eq:other-way-for-id}, we
conclude the desired equality in \eqref{eq:Rains-identity-app}. The proof of the equality expressed in 
\eqref{eq:rains-alt-alt-exp} is the same.

\section{Proof of Corollary~\ref{cor:inv-class-comm}}

\label{sec:proof-inv-class-comm}

Convexity of $R$ and its monotonicity under the local channel $(\cdot
)\mapsto|x\rangle\!\langle x|\otimes(\cdot)$ (applying
\eqref{eq:PPT-monotone}) imply that%
\begin{equation*}
R(\omega)\leq\sum_{x}p(x)R(|x\rangle\!\langle x|\otimes\omega_{x})\leq\sum
_{x}p(x)R(\omega_{x}).
\end{equation*}
Defining the local projection $\overline{\Delta}_{x}:\mathcal{L}%
_{+}(\mathcal{H}_{X})\rightarrow\mathcal{L}_{+}
(\mathcal{H}_{X})$ as%
\begin{equation*}
(\cdot)\mapsto|x\rangle\!\langle x|(\cdot)|x\rangle\!\langle x|,
\end{equation*}
and observing that the sum map $\sum_{x}\overline{\Delta}_{x}$ is trace
preserving implies that the set $\{\overline{\Delta}_{x}\otimes
\operatorname{id}:\mathcal{L}_{+}(\mathcal{H}_{XAB})\rightarrow\mathcal{L}%
_{+}(\mathcal{H}_{XAB})\}_{x}$ constitutes a selective PPT\ operation.
Applying Theorem~\ref{thm:main}, we find that%
\begin{equation*}
R(\omega)\geq\sum_{x}p(x)R(|x\rangle\!\langle x|\otimes\omega_{x})\geq\sum
_{x}p(x)R(\omega_{x}),
\end{equation*}
where the last inequality follows because the partial trace over
$\mathcal{H}_{X}$ is a local channel and $R$ is monotone under the action of
local channels.

\section{Proof of Inequality for Rains Relative Entropy of Approximate E-bits}

\label{sec:rains-up-bnd-approx-ebits}

In this appendix, we prove Lemma~\ref{lem:rains-ineq-approx-max-ent}, which
states that the sandwiched Rains relative entropy and a correction term
provide an upper bound on the number of approximate e-bits that are available
in a state $\widetilde{\Phi}^{d}$ close in fidelity to an ideal maximally
entangled state $\Phi^{d}$. We provide two different proofs of this lemma:\ a
first that is similar to ref.~\cite[Proposition~4]{TWW14} and a second in terms of a pseudo-continuity bound 
for the sandwiched Rains relative entropy, given in Lemma~\ref{lem:pseudo-cont-sandwiched-renyi}\ below.

\begin{lemma}[Approximate
e-bits]
\label{lem:rains-ineq-approx-max-ent}Let $\varepsilon\in\left[  0,1\right]  $,
$\widetilde{\Phi}^{d}\in\mathcal{S}(\mathcal{H}_{\hat{A}\hat{B}})$, and
$\Phi^{d}\in\mathcal{S}(\mathcal{H}_{\hat{A}\hat{B}})$, the last defined through
\eqref{eq:max-ent-def}. Suppose that $F(\widetilde{\Phi}^{d},\Phi^{d}%
)\geq1-\varepsilon$. Then, for all $\alpha>1$,%
\begin{equation*}
\log_{2}d\leq\widetilde{R}_{\alpha}(\widetilde{\Phi}^{d})+\frac{\alpha}%
{\alpha-1}\log_{2}\!\left(  \frac{1}{1-\varepsilon}\right)  .
\end{equation*}

\end{lemma}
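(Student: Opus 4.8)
\textbf{Proof plan for Lemma~\ref{lem:rains-ineq-approx-max-ent}.}
The plan is to bound $\widetilde{R}_\alpha(\widetilde{\Phi}^d)$ from below by exhibiting a good feasible point $\sigma$ in the $\operatorname{PPT}'$ set and then controlling $\widetilde{D}_\alpha(\widetilde{\Phi}^d \Vert \sigma)$ in terms of $\log_2 d$ and the fidelity error. The natural candidate, following the structure of ref.~\cite[Proposition~4]{TWW14}, is to use the operator $\sigma = \frac{1}{d}\left(I \otimes I\right)$ restricted appropriately, or more precisely to exploit that the ``anti-symmetric''-type PPT operator built from the partial transpose of $\Phi^d$ gives the sharpest bound. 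Concretely, note that $(\operatorname{id}\otimes T)(\Phi^d) = \frac{1}{d} F$, where $F$ is the swap operator; since $\|F\|_1 = d^2 \cdot \frac{1}{d} = d$ is too large, one instead works with the decomposition of $\Phi^d$ against a normalized PPT operator. The first step is therefore to write down an explicit $\sigma \in \operatorname{PPT}'(\mathcal{H}_{\hat A \hat B})$ for which $\widetilde{D}_\alpha(\Phi^d \Vert \sigma)$ is easy to compute and equals $\log_2 d$ (this is the known value $\widetilde{R}_\alpha(\Phi^d) = \log_2 d$, which already appears implicitly in the literature).

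The second step is to pass from the ideal state $\Phi^d$ to the approximate state $\widetilde{\Phi}^d$. Here I would use the fidelity hypothesis $F(\widetilde{\Phi}^d, \Phi^d) \geq 1-\varepsilon$ together with a ``pseudo-continuity'' estimate for the sandwiched R\'enyi Rains relative entropy, which is exactly the auxiliary Lemma~\ref{lem:pseudo-cont-sandwiched-renyi} advertised in the appendix. The key inequality one wants is of the form
\begin{equation*}
\widetilde{R}_\alpha(\widetilde{\Phi}^d) \geq \widetilde{R}_\alpha(\Phi^d) - \frac{\alpha}{\alpha-1}\log_2\!\left(\frac{1}{1-\varepsilon}\right),
\end{equation*}
or more directly a lower bound on $\widetilde{D}_\alpha(\widetilde{\Phi}^d \Vert \sigma)$ obtained from $\widetilde{D}_\alpha(\Phi^d \Vert \sigma)$ by a fidelity-loss correction. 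The cleanest route is: apply the measurement channel $\{\Phi^d, I - \Phi^d\}$ to both $\widetilde{\Phi}^d$ and the feasible $\sigma$, invoke the data-processing inequality for $\widetilde{D}_\alpha$ ($\alpha > 1$), and reduce to a two-outcome classical computation where the first outcome has probability $\operatorname{Tr}[\Phi^d \widetilde{\Phi}^d] \geq F(\widetilde{\Phi}^d, \Phi^d) \geq 1-\varepsilon$ (using that one argument is pure, so $F(\widetilde{\Phi}^d,\Phi^d) = \operatorname{Tr}[\Phi^d\widetilde{\Phi}^d]$) and the second outcome has probability $\operatorname{Tr}[(I-\Phi^d)\sigma] \leq 1$ against a small denominator. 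Explicitly bounding the classical sandwiched R\'enyi divergence of $(1-\varepsilon, \varepsilon)$ against $(1/d, 1-1/d)$ from below — keeping only the dominant term $(1-\varepsilon)^\alpha (1/d)^{1-\alpha}$ inside $\widetilde{Q}_\alpha$ — yields $\frac{1}{\alpha-1}\log_2\!\left((1-\varepsilon)^\alpha d^{\alpha-1}\right) = \log_2 d + \frac{\alpha}{\alpha-1}\log_2(1-\varepsilon)$, which after rearranging is precisely the claimed bound.

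The main obstacle I anticipate is the second step: getting the fidelity condition to interact cleanly with the sandwiched R\'enyi divergence, since the latter is not continuous in the naive sense and a crude continuity bound would introduce dimension-dependent error terms that ruin the statement. The pseudo-continuity approach sidesteps this by never leaving the R\'enyi world — one compares $\widetilde{Q}_\alpha$ values directly and uses only that dropping non-negative terms from a sum can only decrease $\widetilde{Q}_\alpha$ when $\alpha > 1$ (so the divergence only increases, in the right direction). A secondary, more technical point is verifying that the chosen $\sigma$ genuinely lies in $\operatorname{PPT}'$, i.e.\ that $\|(\operatorname{id}\otimes T)(\sigma)\|_1 \leq 1$; this is a standard computation with the swap operator's eigenprojections onto symmetric and antisymmetric subspaces, and I would simply cite the same fact used in establishing $\widetilde{R}_\alpha(\Phi^d) = \log_2 d$. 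If instead one prefers the first proof route (parallel to ref.~\cite[Proposition~4]{TWW14}), the obstacle shifts to handling the infimum over all $\sigma \in \operatorname{PPT}'$ uniformly, which is why the measurement-channel / data-processing argument is attractive: it reduces everything to a single explicit feasible point and a one-line classical estimate.
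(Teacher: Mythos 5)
Your computational core coincides with the paper's first proof: apply the two-outcome measurement channel $\{\Phi^d, I-\Phi^d\}$, invoke data processing for $\widetilde{D}_{\alpha}$ with $\alpha>1$, keep only the dominant term of $\widetilde{Q}_{\alpha}$, and combine $\operatorname{Tr}[\Phi^{d}\widetilde{\Phi}^{d}]=F(\widetilde{\Phi}^{d},\Phi^{d})\geq 1-\varepsilon$ with $\operatorname{Tr}[\Phi^{d}\sigma]\leq 1/d$. Your alternative route via the pseudo-continuity bound and $\widetilde{R}_{1/2}(\Phi^{d})\geq\log_{2}d$ is the paper's second proof. So in substance you have both of the paper's arguments.

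There is, however, one logical slip in your framing that must be corrected before the write-up is sound. You open by saying you will lower-bound $\widetilde{R}_{\alpha}(\widetilde{\Phi}^{d})=\inf_{\sigma\in\operatorname{PPT}^{\prime}}\widetilde{D}_{\alpha}(\widetilde{\Phi}^{d}\Vert\sigma)$ ``by exhibiting a good feasible point,'' and you close by claiming the measurement-channel argument ``reduces everything to a single explicit feasible point.'' Exhibiting a feasible point can only give an \emph{upper} bound on an infimum. What makes the argument work is that the entire chain of inequalities is run for an \emph{arbitrary} $\sigma\in\operatorname{PPT}^{\prime}(\mathcal{H}_{\hat{A}\hat{B}})$, and the single place where feasibility enters is the uniform bound $\operatorname{Tr}[\Phi^{d}\sigma]\leq 1/d$ (Rains' Lemma~2, proved via $(\operatorname{id}\otimes T)(\Phi^{d})=\frac{1}{d}F^{d}$ and $\vert\operatorname{Tr}[F^{d}(\operatorname{id}\otimes T)(\sigma)]\vert\leq\Vert(\operatorname{id}\otimes T)(\sigma)\Vert_{1}\leq 1$). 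Your reference distribution ``$(1/d,1-1/d)$'' must therefore be understood as the estimate $(\operatorname{Tr}[\Phi^{d}\sigma])^{1-\alpha}\geq d^{\alpha-1}$, valid because $1-\alpha<0$, not as the value at one particular $\sigma$; only after $\widetilde{D}_{\alpha}(\widetilde{\Phi}^{d}\Vert\sigma)\geq\log_{2}d-\frac{\alpha}{\alpha-1}\log_{2}\!\left(\frac{1}{1-\varepsilon}\right)$ is established for every feasible $\sigma$ may you take the infimum. A minor further point on your second route: the pseudo-continuity bound relates $\widetilde{R}_{\alpha}(\widetilde{\Phi}^{d})$ to $\widetilde{R}_{\beta}(\Phi^{d})$ with $\beta=\alpha/(2\alpha-1)\in(1/2,1)$, not to $\widetilde{R}_{\alpha}(\Phi^{d})$; one then descends to $\widetilde{R}_{1/2}(\Phi^{d})\geq\log_{2}d$ by monotonicity in the R\'enyi parameter.
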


\begin{proof}
We provide two different proofs of this statement. The first proof is similar
to that given in ref.~\cite[Proposition~4]{TWW14}, and we provide it for completeness. Let a measurement channel $\mathcal{M}\in\operatorname{CPTP}%
(\mathcal{H}_{\hat{A}\hat{B}},\mathcal{H}_{X})$ be defined as%
\begin{equation*}
\mathcal{M}(\cdot)\coloneqq\operatorname{Tr}[\Phi^{d}(\cdot)]|1\rangle
\!\langle1|+\operatorname{Tr}[(I-\Phi^{d})(\cdot)]|0\rangle\!\langle0|.
\end{equation*}
Recall that $F(\widetilde{\Phi}^{d},\Phi^{d})=\operatorname{Tr}[\Phi
^{d}\widetilde{\Phi}^{d}]$ because $\Phi^{d}$ is pure. Then, for $\sigma
\in\operatorname{PPT}^{\prime}(\mathcal{H}_{\hat{A}\hat{B}})$ and $\alpha>1$,
we find that
\begin{align}
&  \widetilde{D}_{\alpha}(\widetilde{\Phi}^{d}\Vert\sigma)\nonumber\\
&  \geq\widetilde{D}_{\alpha}(\mathcal{M}(\widetilde{\Phi}^{d})\Vert
\mathcal{M}(\sigma))\nonumber\\
&  =\frac{1}{\alpha-1}\log_{2}\!\left[
\begin{array}
[c]{c}%
\left(  \operatorname{Tr}[\Phi^{d}\widetilde{\Phi}^{d}]\right)  ^{\alpha
}\left(  \operatorname{Tr}[\Phi^{d}\sigma]\right)  ^{1-\alpha}+\\
\left(  1-\operatorname{Tr}[\Phi^{d}\widetilde{\Phi}^{d}]\right)  ^{\alpha
}\left(  \operatorname{Tr}[\sigma]-\operatorname{Tr}[\Phi^{d}\sigma]\right)
^{1-\alpha}%
\end{array}
\right] \nonumber\\
&  \geq\frac{1}{\alpha-1}\log_{2}\!\left[  \left(  \operatorname{Tr}[\Phi
^{d}\widetilde{\Phi}^{d}]\right)  ^{\alpha}\left(  \operatorname{Tr}[\Phi
^{d}\sigma]\right)  ^{1-\alpha}\right] \nonumber\\
&  \geq\frac{1}{\alpha-1}\log_{2}\!\left[  \left(  1-\varepsilon\right)
^{\alpha}\left(  \frac{1}{d}\right)  ^{1-\alpha}\right] \nonumber\\
&  =-\frac{\alpha}{\alpha-1}\log_{2}\!\left(  \frac{1}{1-\varepsilon}\right)
+\log_{2}d. \label{eq:ineq-chain-Rains-alpha-bnd}%
\end{align}
The first inequality follows from the data-processing inequality for
$\widetilde{D}_{\alpha}$ for $\alpha>1$. The second
inequality follows by dropping the second term inside the logarithm. The third
inequality follows by applying the assumption that $F(\widetilde{\Phi}%
^{d},\Phi^{d})\geq1-\varepsilon$, as well as the known fact that
\cite[Lemma~2]{R99}%
\begin{equation}
\operatorname{Tr}[\Phi^{d}\sigma]\leq\frac{1}{d}
\label{eq:ppt'-bound-ent-test}%
\end{equation}
for $\sigma\in\operatorname{PPT}^{\prime}(\mathcal{H}_{\hat{A}\hat{B}})$.
Indeed, observing that the set $\operatorname{PPT}^{\prime}(\mathcal{H}%
_{\hat{A}\hat{B}})$ is the same regardless of the basis in which the partial
transpose $\operatorname{id}\otimes T$ is taken, we can take the basis to be
the same as that for the maximally entangled state $\Phi^{d}$, and we find
that%
\begin{align}
\operatorname{Tr}[\Phi^{d}\sigma]  &  =\operatorname{Tr}[\Phi^{d}\left(
(\operatorname{id}\otimes T)(\operatorname{id}\otimes T)\right)
\sigma]\nonumber\\
&  =\operatorname{Tr}[(\operatorname{id}\otimes T)(\Phi^{d})(\operatorname{id}%
\otimes T)(\sigma)]\nonumber\\
&  =\frac{1}{d}\operatorname{Tr}[F^{d}(\operatorname{id}\otimes T)(\sigma
)]\nonumber\\
&  \leq\frac{1}{d}\left\Vert (\operatorname{id}\otimes T)(\sigma)\right\Vert
_{1}\leq\frac{1}{d}.
\end{align}
In the above, we used the facts that the partial transpose is self-inverse,
self-adjoint with respect to the Hilbert--Schmidt inner product, that
$(\operatorname{id}\otimes T)(\Phi^{d})=\frac{1}{d}F^{d}$, where $F^{d}%
\in\mathcal{L}(\mathcal{H}_{\hat{A}\hat{B}})$ is the unitary swap operator,
and the variational characterization of the trace norm of $X\in\mathcal{L}%
(\mathcal{H})$ as $\left\Vert X\right\Vert _{1}=\sup_{U}\left\vert
\operatorname{Tr}[XU]\right\vert $, with the supremum over every unitary
$U\in\mathcal{L}(\mathcal{H})$.

Returning to \eqref{eq:ineq-chain-Rains-alpha-bnd}, we have thus proven the
following inequality for all $\sigma\in\operatorname{PPT}^{\prime}%
(\mathcal{H}_{\hat{A}\hat{B}})$ and $\alpha>1$:%
\begin{equation*}
\widetilde{D}_{\alpha}(\widetilde{\Phi}^{d}\Vert\sigma)\geq-\frac{\alpha
}{\alpha-1}\log_{2}\!\left(  \frac{1}{1-\varepsilon}\right)  +\log_{2}d.
\end{equation*}
We conclude 
after taking an infimum over $\sigma
\in\operatorname{PPT}^{\prime}(\mathcal{H}_{\hat{A}\hat{B}})$ and applying the
definition of $\widetilde{R}_{\alpha}(\widetilde{\Phi}^{d})$.

A second proof follows by applying
Lemma~\ref{lem:pseudo-cont-sandwiched-renyi}\ below:%
\begin{align}
\widetilde{R}_{\alpha}(\widetilde{\Phi}^{d})  &  \geq\widetilde{R}_{\beta
}(\Phi^{d})+\frac{\alpha}{\alpha-1}\log_{2}F(\widetilde{\Phi}^{d},\Phi
^{d})\nonumber\\
&  \geq\widetilde{R}_{\beta}(\Phi^{d})+\frac{\alpha}{\alpha-1}\log
_{2}(1-\varepsilon)\nonumber\\
&  \geq\widetilde{R}_{1/2}(\Phi^{d})+\frac{\alpha}{\alpha-1}\log
_{2}(1-\varepsilon),
\end{align}
where we have employed the monotonicity of the sandwiched R\'enyi relative entropy
with respect to $\alpha$  \cite{KW20book}. Then consider that%
\begin{align}
\widetilde{R}_{1/2}(\Phi^{d})  &  =\inf_{\sigma\in\operatorname{PPT}^{\prime
}(\mathcal{H}_{\hat{A}\hat{B}})}\left[  -\log_{2}F(\Phi^{d},\sigma)\right]
\nonumber\\
&  =-\log_{2}\sup_{\sigma\in\operatorname{PPT}^{\prime}(\mathcal{H}_{\hat
{A}\hat{B}})}F(\Phi^{d},\sigma)\nonumber\\
&  =-\log_{2}\sup_{\sigma\in\operatorname{PPT}^{\prime}(\mathcal{H}_{\hat
{A}\hat{B}})}\operatorname{Tr}[\Phi^{d}\sigma]\nonumber\\
&  \geq\log_{2}d,
\end{align}
where we again have applied \eqref{eq:ppt'-bound-ent-test}.
\end{proof}

\begin{lemma}[Pseudo-continuity
bound]
\label{lem:pseudo-cont-sandwiched-renyi}For states $\rho_{0},\rho_{1}%
\in\mathcal{S}(\mathcal{H}_{AB})$, $\beta\in(1/2,1)$, and $\alpha
=\beta/\left(  2\beta-1\right)  >1$, the following bound holds%
\begin{equation}
\widetilde{R}_{\alpha}(\rho_{0})-\widetilde{R}_{\beta}(\rho_{1})\geq
\frac{\alpha}{\alpha-1}\log_{2}F(\rho_{0},\rho_{1}).
\label{eq:rains-pseudo-cont}%
\end{equation}

\end{lemma}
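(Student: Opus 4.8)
The plan is to derive \eqref{eq:rains-pseudo-cont} from a ``pointwise'' inequality holding with a \emph{fixed} positive semi-definite $\sigma$ in place of the optimizing operator, and then to optimize over $\sigma\in\operatorname{PPT}^{\prime}(\mathcal{H}_{AB})$. Writing the relation between the two orders as $\tfrac{1}{2\alpha}+\tfrac{1}{2\beta}=1$ (equivalent to $\alpha=\beta/(2\beta-1)$, and which also forces $\tfrac{\beta}{\beta-1}=-\tfrac{\alpha}{\alpha-1}$), I would first establish that for all $\omega,\tau\in\mathcal{S}(\mathcal{H})$ and $\sigma\in\mathcal{L}_{+}(\mathcal{H})$,
\[
\widetilde{D}_{\alpha}(\omega\Vert\sigma)\;\geq\;\widetilde{D}_{\beta}(\tau\Vert\sigma)+\frac{\alpha}{\alpha-1}\log_{2}F(\omega,\tau).
\]
If $\operatorname{supp}(\omega)\not\subseteq\operatorname{supp}(\sigma)$, the left-hand side equals $+\infty$ and there is nothing to show, so assume $\operatorname{supp}(\omega)\subseteq\operatorname{supp}(\sigma)$.

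The crucial observation is that $2\alpha$ and $2\beta$ are H\"older-conjugate Schatten exponents and that the relevant powers of $\sigma$ cancel. Set $A\coloneqq\sigma^{\frac{1-\alpha}{2\alpha}}\omega^{1/2}$ and $B\coloneqq\sigma^{\frac{1-\beta}{2\beta}}\tau^{1/2}$, where negative powers of $\sigma$ are taken on its support. From $AA^{\dagger}=\sigma^{\frac{1-\alpha}{2\alpha}}\omega\,\sigma^{\frac{1-\alpha}{2\alpha}}$, $BB^{\dagger}=\sigma^{\frac{1-\beta}{2\beta}}\tau\,\sigma^{\frac{1-\beta}{2\beta}}$, and the identity $\operatorname{Tr}[(XX^{\dagger})^{p}]=\|X\|_{2p}^{2p}$, we get $\widetilde{Q}_{\alpha}(\omega\Vert\sigma)=\|A\|_{2\alpha}^{2\alpha}$ and $\widetilde{Q}_{\beta}(\tau\Vert\sigma)=\|B\|_{2\beta}^{2\beta}$, hence $\widetilde{D}_{\alpha}(\omega\Vert\sigma)=\tfrac{2\alpha}{\alpha-1}\log_{2}\|A\|_{2\alpha}$ and $\widetilde{D}_{\beta}(\tau\Vert\sigma)=\tfrac{2\beta}{\beta-1}\log_{2}\|B\|_{2\beta}=-\tfrac{2\alpha}{\alpha-1}\log_{2}\|B\|_{2\beta}$. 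Since $\tfrac{1-\alpha}{2\alpha}+\tfrac{1-\beta}{2\beta}=\bigl(\tfrac{1}{2\alpha}+\tfrac{1}{2\beta}\bigr)-1=0$ and $\operatorname{supp}(\omega)\subseteq\operatorname{supp}(\sigma)$, one has $A^{\dagger}B=\omega^{1/2}\tau^{1/2}$, and H\"older's inequality for Schatten norms gives $\|\omega^{1/2}\tau^{1/2}\|_{1}=\|A^{\dagger}B\|_{1}\leq\|A\|_{2\alpha}\|B\|_{2\beta}$. Combining,
\[
\widetilde{D}_{\alpha}(\omega\Vert\sigma)-\widetilde{D}_{\beta}(\tau\Vert\sigma)=\frac{2\alpha}{\alpha-1}\log_{2}\!\bigl(\|A\|_{2\alpha}\|B\|_{2\beta}\bigr)\geq\frac{2\alpha}{\alpha-1}\log_{2}\|\omega^{1/2}\tau^{1/2}\|_{1}=\frac{\alpha}{\alpha-1}\log_{2}F(\omega,\tau),
\]
using $\tfrac{2\alpha}{\alpha-1}>0$ and $F(\omega,\tau)=\|\sqrt{\omega}\sqrt{\tau}\|_{1}^{2}$.

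To conclude, apply the pointwise bound with $\omega=\rho_{0}$ and $\tau=\rho_{1}$ for each $\sigma\in\operatorname{PPT}^{\prime}(\mathcal{H}_{AB})$, so that $\widetilde{D}_{\alpha}(\rho_{0}\Vert\sigma)\geq\widetilde{D}_{\beta}(\rho_{1}\Vert\sigma)+\tfrac{\alpha}{\alpha-1}\log_{2}F(\rho_{0},\rho_{1})\geq\widetilde{R}_{\beta}(\rho_{1})+\tfrac{\alpha}{\alpha-1}\log_{2}F(\rho_{0},\rho_{1})$; taking the infimum over $\sigma$ of the left-hand side yields $\widetilde{R}_{\alpha}(\rho_{0})\geq\widetilde{R}_{\beta}(\rho_{1})+\tfrac{\alpha}{\alpha-1}\log_{2}F(\rho_{0},\rho_{1})$, which is \eqref{eq:rains-pseudo-cont}. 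I do not expect a genuine obstacle here: the statement is essentially a carefully arranged instance of H\"older's inequality, so the only points needing care are the bookkeeping of exponents (in particular the identity $\tfrac{\beta}{\beta-1}=-\tfrac{\alpha}{\alpha-1}$, which is what makes the two contributions combine with the same positive prefactor) and the support/generalized-inverse technicalities when $\sigma$ is not full rank, the latter handled by treating $\operatorname{supp}(\omega)\not\subseteq\operatorname{supp}(\sigma)$ separately and otherwise reading all logarithms in the extended reals (noting that $F(\rho_{0},\rho_{1})=0$ renders the claim vacuous).
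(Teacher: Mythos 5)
Your proof is correct and takes essentially the same route as the paper: establish the pointwise inequality $\widetilde{D}_{\alpha}(\rho_{0}\Vert\sigma)-\widetilde{D}_{\beta}(\rho_{1}\Vert\sigma)\geq\frac{\alpha}{\alpha-1}\log_{2}F(\rho_{0},\rho_{1})$ for each fixed $\sigma\in\mathcal{L}_{+}(\mathcal{H}_{AB})$ and then optimize over $\sigma\in\operatorname{PPT}^{\prime}(\mathcal{H}_{AB})$ exactly as the paper does. The only difference is that the paper imports the pointwise bound from the literature (noting that it holds for general $\sigma\in\mathcal{L}_{+}$), whereas you reprove it self-containedly via H\"older duality of the Schatten norms with conjugate exponents $2\alpha$ and $2\beta$; your exponent bookkeeping (in particular $\frac{\beta}{\beta-1}=-\frac{\alpha}{\alpha-1}$) and your handling of the support condition are both correct.
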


\begin{proof}
Recall the following bound from ref.~\cite[Lemma~1]{Wang2019states}:%
\begin{equation*}
\widetilde{D}_{\alpha}(\rho_{0}\Vert\sigma)-\widetilde{D}_{\beta}(\rho
_{1}\Vert\sigma)\geq\frac{\alpha}{\alpha-1}\log_{2}F(\rho_{0},\rho_{1}),
\end{equation*}
which holds for $\rho_{0},\rho_{1}\in\mathcal{S}(\mathcal{H}_{AB})$,
$\sigma\in\mathcal{L}_{+}(\mathcal{H}_{AB})$, $\beta\in(1/2,1)$, and
$\alpha=\beta/\left(  2\beta-1\right)  >1$. (An inspection of the proof there
reveals that it holds more generally for $\sigma\in\mathcal{L}_{+}%
(\mathcal{H}_{AB})$, and not just for $\sigma\in\mathcal{S}(\mathcal{H}_{AB}%
)$.) Then, for $\sigma\in\operatorname{PPT}^{\prime}(\mathcal{H}_{AB})$, we
can rewrite this as%
\begin{align}
\widetilde{D}_{\alpha}(\rho_{0}\Vert\sigma)  &  \geq\widetilde{D}_{\beta}%
(\rho_{1}\Vert\sigma)+\frac{\alpha}{\alpha-1}\log_{2}F(\rho_{0},\rho
_{1})\nonumber\\
&  \geq\inf_{\sigma\in\operatorname{PPT}^{\prime}(\mathcal{H}_{AB})}%
\widetilde{D}_{\beta}(\rho_{1}\Vert\sigma)+\frac{\alpha}{\alpha-1}\log
_{2}F(\rho_{0},\rho_{1})\nonumber\\
&  =\widetilde{R}_{\beta}(\rho_{1})+\frac{\alpha}{\alpha-1}\log_{2}F(\rho
_{0},\rho_{1}).
\end{align}
Since this inequality holds for every $\sigma\in\operatorname{PPT}^{\prime
}(\mathcal{H}_{AB})$, we conclude that%
\begin{equation*}
\inf_{\sigma\in\operatorname{PPT}^{\prime}(\mathcal{H}_{AB})}\widetilde
{D}_{\alpha}(\rho_{0}\Vert\sigma)\geq\widetilde{R}_{\beta}(\rho_{1}%
)+\frac{\alpha}{\alpha-1}\log_{2}F(\rho_{0},\rho_{1}),
\end{equation*}
which is equivalent to the desired inequality in \eqref{eq:rains-pseudo-cont}.
\end{proof}

\section{Proof of Corollary~\ref{cor:reg-rains-bnd}}

\label{sec:reg-Rains-up-bnd}

We begin with the following lemma.

\begin{lemma}[Dimension bound]
\label{lem:dim-bnd-ed}For $\varepsilon\in(0,1)$ and $\rho\in\mathcal{S}%
(\mathcal{H}_{AB})$, the following bounds hold%
\begin{equation*}
0\leq E_{d}^{\varepsilon}(\rho)\leq\log_{2}D+\log_{2}\!\left(  \frac
{1}{1-\varepsilon}\right)  ,
\end{equation*}
where%
\begin{equation*}
D\coloneqq\min\left\{  \dim(\mathcal{H}_{A}),\dim(\mathcal{H}_{B})\right\}  .
\end{equation*}
\end{lemma}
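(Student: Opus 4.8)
The lower bound $E_{d}^{\varepsilon}(\rho)\geq 0$ will be immediate: I would exhibit the feasible protocol with $p=0$, $d=2$, $\widetilde{\Phi}^{2}=\Phi^{2}$ (so that $F(\widetilde{\Phi}^{2},\Phi^{2})=1\geq 1-\varepsilon$), an arbitrary $\sigma\in\mathcal{S}(\mathcal{H}_{\hat{A}\hat{B}})$, and $\mathcal{L}$ the local---hence LOCC---channel that discards the input and prepares $|0\rangle\!\langle 0|\otimes\sigma$; its objective value is $0\cdot\log_{2}2=0$.

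For the upper bound, the plan is to feed a uniform-in-$\alpha$ estimate on the sandwiched Rains relative entropy into Theorem~\ref{thm:non-asymp-ED-upper-bnd}. The first step is to show that $\widetilde{R}_{\alpha}(\rho)\leq\log_{2}D$ for \emph{every} $\alpha>1$. The one external ingredient is the standard negativity bound $\|(\operatorname{id}\otimes T)(\rho)\|_{1}\leq D$, valid for all states $\rho$: one decomposes $\rho$ into pure states and bounds each summand's negativity by $\bigl(\sum_{j}\sqrt{\lambda_{j}}\bigr)^{2}\leq D$ via Cauchy--Schwarz on its Schmidt coefficients $\{\lambda_{j}\}$ (see, e.g., \cite{Vidal2002}). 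Given this, $\sigma\coloneqq\rho/\|(\operatorname{id}\otimes T)(\rho)\|_{1}$ lies in $\operatorname{PPT}^{\prime}(\mathcal{H}_{AB})$, and the scaling property \eqref{eq:c-prop} of $\widetilde{D}_{\alpha}$ (which holds for $\alpha>1$) gives $\widetilde{D}_{\alpha}(\rho\Vert\sigma)=\widetilde{D}_{\alpha}(\rho\Vert\rho)+\log_{2}\|(\operatorname{id}\otimes T)(\rho)\|_{1}=\log_{2}\|(\operatorname{id}\otimes T)(\rho)\|_{1}\leq\log_{2}D$; taking the infimum over $\operatorname{PPT}^{\prime}(\mathcal{H}_{AB})$ in the definition of $\widetilde{R}_{\alpha}$ delivers the claim.

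Combining this with Theorem~\ref{thm:non-asymp-ED-upper-bnd}, one gets, for every $\alpha>1$,
\[
E_{d}^{\varepsilon}(\rho)\leq\widetilde{R}_{\alpha}(\rho)+\frac{\alpha}{\alpha-1}\log_{2}\!\left(\frac{1}{1-\varepsilon}\right)\leq\log_{2}D+\frac{\alpha}{\alpha-1}\log_{2}\!\left(\frac{1}{1-\varepsilon}\right).
\]
Since $\varepsilon\in(0,1)$ forces $\log_{2}(1/(1-\varepsilon))\geq 0$ and $\tfrac{\alpha}{\alpha-1}\to 1$ as $\alpha\to\infty$, taking the infimum over $\alpha>1$ collapses the prefactor to $1$ and yields the asserted bound $E_{d}^{\varepsilon}(\rho)\leq\log_{2}D+\log_{2}(1/(1-\varepsilon))$.

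I do not expect a genuine obstacle here: the only point meriting care is that the estimate $\widetilde{R}_{\alpha}(\rho)\leq\log_{2}D$ must hold \emph{uniformly} in $\alpha$, since this is exactly what lets the $\alpha\to\infty$ limit replace $\tfrac{\alpha}{\alpha-1}$ by $1$ (knowing only the $\alpha=1$ statement $R(\rho)\leq\log_{2}D$ would force a strictly larger coefficient on the $\varepsilon$-correction). A self-contained alternative that avoids invoking Theorem~\ref{thm:non-asymp-ED-upper-bnd} would apply $\widetilde{R}_{\alpha}$ directly to $\mathcal{L}(\rho)$, using selective LOCC monotonicity (Theorem~\ref{thm:main} and Corollary~\ref{rem:selective-PPT-actual-rel-ents}) together with the direct-sum handling of the success/failure flag and Lemma~\ref{lem:rains-ineq-approx-max-ent}; but this merely reproduces the inequality already proved in Theorem~\ref{thm:non-asymp-ED-upper-bnd}, so citing that theorem is the more economical route.
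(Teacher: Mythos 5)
Your proof is correct and follows the same skeleton as the paper's: invoke Theorem~\ref{thm:non-asymp-ED-upper-bnd}, establish the uniform-in-$\alpha$ estimate $\widetilde{R}_{\alpha}(\rho)\leq\log_{2}D$, and let $\alpha\rightarrow\infty$ to collapse the prefactor $\tfrac{\alpha}{\alpha-1}$ to $1$; your remark that uniformity in $\alpha$ is the crux is exactly right. The only place you diverge is in how the intermediate bound $\widetilde{R}_{\alpha}(\rho)\leq\log_{2}D$ is obtained. The paper argues via teleportation: $\rho$ is realized from $\Phi^{D}$ by LOCC, so $\widetilde{R}_{\alpha}(\rho)\leq\widetilde{R}_{\alpha}(\Phi^{D})\leq\log_{2}D$ by \eqref{eq:PPT-monotone} and a cited textbook bound on the maximally entangled state. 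You instead bound $\widetilde{R}_{\alpha}(\rho)$ by the logarithmic negativity, choosing $\sigma=\rho/\left\Vert(\operatorname{id}\otimes T)(\rho)\right\Vert_{1}\in\operatorname{PPT}^{\prime}(\mathcal{H}_{AB})$ (equivalently, taking $\sigma=\rho$ in \eqref{eq:rains-alt-exp}--\eqref{eq:rains-alt-alt-exp}) and using $\left\Vert(\operatorname{id}\otimes T)(\rho)\right\Vert_{1}\leq D$ via Cauchy--Schwarz on Schmidt coefficients plus convexity of the trace norm over a pure-state decomposition. Both routes are valid and of comparable length; yours makes explicit the fact, advertised in the introduction, that the Rains quantities are dominated by the logarithmic negativity, while the paper's teleportation argument avoids any computation with the partial transpose of $\rho$ itself. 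Your lower-bound argument (the trivial $p=0$ protocol) is just a spelled-out version of the paper's ``direct consequence of the definition.''
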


\begin{proof}
The lower bound on $E_{d}^{\varepsilon}(\rho)$ is a direct consequence of its
definition. For the other bound, we employ
Theorem~\ref{thm:non-asymp-ED-upper-bnd} to conclude that the following holds
for all $\alpha>1$:%
\begin{align}
E_{d}^{\varepsilon}(\rho)  &  \leq\widetilde{R}_{\alpha}(\rho)+\frac{\alpha
}{\alpha-1}\log_{2}\!\left(  \frac{1}{1-\varepsilon}\right) \nonumber\\
&  \leq\widetilde{R}_{\alpha}(\Phi^{D})+\frac{\alpha}{\alpha-1}\log
_{2}\!\left(  \frac{1}{1-\varepsilon}\right) \nonumber\\
&  \leq\log_{2}D+\frac{\alpha}{\alpha-1}\log_{2}\!\left(  \frac{1}%
{1-\varepsilon}\right)  .
\end{align}
The second inequality follows from a teleportation argument:\ $\rho$ can be realized from $\Phi^{D}$ by means of LOCC\ via the
teleportation protocol (see, e.g., ref.~\cite{W17}). The final inequality is a consequence of the same
reasoning given for Eq.~(5.2.88)\ of ref.~\cite{KW20book}. Since the bound holds
for all $\alpha>1$, we can take the limit $\alpha\rightarrow\infty$ to arrive
at the desired upper bound on $E_{d}^{\varepsilon}(\rho)$.
\end{proof}

We now state the following theorem.

\begin{theorem}[Regularized quantities]
\label{thm:ed-reg-form}For all $\varepsilon\in(0,1)$, $\rho\in\mathcal{S}%
(\mathcal{H}_{AB})$, and $\ell\in\mathbb{N}$, the following equalities hold%
\begin{align}
\liminf_{n\rightarrow\infty}\frac{1}{n}E_{d}^{\varepsilon}(\rho^{\otimes n})
&  =\frac{1}{\ell}\liminf_{n\rightarrow\infty}\frac{1}{n}E_{d}^{\varepsilon
}((\rho^{\otimes\ell})^{\otimes n}),\label{eq:lim-inf-ED-reg}\\
\limsup_{n\rightarrow\infty}\frac{1}{n}E_{d}^{\varepsilon}(\rho^{\otimes n})
&  =\frac{1}{\ell}\limsup_{n\rightarrow\infty}\frac{1}{n}E_{d}^{\varepsilon
}((\rho^{\otimes\ell})^{\otimes n}). \label{eq:lim-sup-ED-reg}%
\end{align}
Consequently, for all $\ell\in\mathbb{N}$,%
\begin{align}
E_{d}(\rho)  &  =\frac{1}{\ell}E_{d}(\rho^{\otimes\ell}),\label{eq:PADE-reg}\\
\widetilde{E}_{d}(\rho)  &  =\frac{1}{\ell}\widetilde{E}_{d}(\rho^{\otimes
\ell}). \label{eq:sc-PADE-reg}%
\end{align}

\end{theorem}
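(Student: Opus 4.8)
The plan is to prove Theorem~\ref{thm:ed-reg-form} in two stages: first establish the two regularization identities \eqref{eq:lim-inf-ED-reg} and \eqref{eq:lim-sup-ED-reg} for the non-asymptotic PADE, and then derive the consequences \eqref{eq:PADE-reg} and \eqref{eq:sc-PADE-reg} from them by a short sup/inf manipulation using Lemma~\ref{lem:dim-bnd-ed} to guarantee finiteness. For the first stage, the key observation is a superadditivity-type relation: stacking $k$ independent copies of a PADE protocol on $\rho^{\otimes m}$ (each producing $p|1\rangle\!\langle1|\otimes\widetilde{\Phi}^{d}+(1-p)|0\rangle\!\langle0|\otimes\sigma$) and post-selecting on all $k$ successes yields, by multiplicativity of fidelity under tensor products, a state that is $(1-(1-\varepsilon)^{k})$-close to $\Phi^{d^{k}}$ with success probability $p^{k}$. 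Hence
\begin{equation*}
E_{d}^{\varepsilon'}(\rho^{\otimes km}) \geq k \cdot E_{d}^{\varepsilon}(\rho^{\otimes m})
\end{equation*}
where $1-\varepsilon' = (1-\varepsilon)^{k}$; more usefully, since $\varepsilon$ is fixed, one uses that $E_{d}^{\varepsilon}$ is nondecreasing in $\varepsilon$ and monotone under LOCC to get the cleaner bound $E_{d}^{\varepsilon}(\rho^{\otimes \ell n}) \geq n \cdot E_{d}^{\varepsilon_n}(\rho^{\otimes \ell})$ together with a matching bound in the other direction obtained by simply regarding a protocol on $(\rho^{\otimes\ell})^{\otimes n}$ as a protocol on $\rho^{\otimes \ell n}$.

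Concretely, the easy inequality is that every protocol on $(\rho^{\otimes\ell})^{\otimes n} = \rho^{\otimes \ell n}$ is a protocol on $\rho^{\otimes \ell n}$, so $E_{d}^{\varepsilon}((\rho^{\otimes\ell})^{\otimes n}) = E_{d}^{\varepsilon}(\rho^{\otimes \ell n})$ as functions — in fact these are literally equal since the Hilbert space is the same. Wait: more carefully, $E_{d}^{\varepsilon}((\rho^{\otimes\ell})^{\otimes n})$ and $E_{d}^{\varepsilon}(\rho^{\otimes \ell n})$ are the PADE of the \emph{same} state, hence equal. Therefore
\begin{equation*}
\liminf_{n\to\infty}\frac{1}{n}E_{d}^{\varepsilon}((\rho^{\otimes\ell})^{\otimes n}) = \liminf_{n\to\infty}\frac{1}{n}E_{d}^{\varepsilon}(\rho^{\otimes \ell n}) = \ell \liminf_{n\to\infty}\frac{1}{\ell n}E_{d}^{\varepsilon}(\rho^{\otimes \ell n}),
\end{equation*}
and the remaining point is that the subsequence $\{\frac{1}{\ell n}E_{d}^{\varepsilon}(\rho^{\otimes \ell n})\}_n$ has the same $\liminf$ (resp.\ $\limsup$) as the full sequence $\{\frac{1}{m}E_{d}^{\varepsilon}(\rho^{\otimes m})\}_m$. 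This is where the superadditivity-flavored estimate is needed: one shows $E_{d}^{\varepsilon}(\rho^{\otimes m}) \leq E_{d}^{\varepsilon}(\rho^{\otimes m'}) + (m'-m)(\log_2 D + \log_2\frac{1}{1-\varepsilon})$ for $m \leq m'$ (append trivial distillation on the extra copies, or just bound the gap using Lemma~\ref{lem:dim-bnd-ed}), so that the values at nearby indices differ by $O(1)$ relative to the leading $m$, forcing the subsequence along multiples of $\ell$ to capture the same $\liminf$ and $\limsup$. I expect this interleaving/sandwiching argument — making precise that passing to an arithmetic subsequence does not change $\liminf$ or $\limsup$ of $\frac{1}{m}E_{d}^{\varepsilon}(\rho^{\otimes m})$ — to be the main technical obstacle, though it is standard once the right $O(1)$-gap bound (essentially Lemma~\ref{lem:dim-bnd-ed}) is in hand.

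Finally, for the consequences: \eqref{eq:PADE-reg} follows by taking $\inf_{\varepsilon\in(0,1)}$ on both sides of \eqref{eq:lim-inf-ED-reg}, using that the $\inf$ over $\varepsilon$ commutes with the equality (both sides depend on $\varepsilon$ in the same way), and recalling the definition \eqref{eq:asymp-PADE}; likewise \eqref{eq:sc-PADE-reg} follows from \eqref{eq:lim-sup-ED-reg} by taking $\sup_{\varepsilon\in(0,1)}$ and using \eqref{eq:strong-conv-PADE}. Lemma~\ref{lem:dim-bnd-ed} ensures all quantities are finite and nonnegative, so no degenerate cases arise. One then obtains Corollary~\ref{cor:reg-rains-bnd} immediately: applying Theorem~\ref{thm:strong-conv-Rains} to $\rho^{\otimes\ell}$ gives $\widetilde{E}_{d}(\rho^{\otimes\ell}) \leq R(\rho^{\otimes\ell})$, and combining with \eqref{eq:sc-PADE-reg} yields $\widetilde{E}_{d}(\rho) = \frac{1}{\ell}\widetilde{E}_{d}(\rho^{\otimes\ell}) \leq \frac{1}{\ell}R(\rho^{\otimes\ell})$ for every $\ell\in\mathbb{N}$, hence the infimum bound.
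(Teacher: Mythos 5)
Your final argument is correct and essentially the paper's: identify $E_{d}^{\varepsilon}((\rho^{\otimes\ell})^{\otimes n})$ with $E_{d}^{\varepsilon}(\rho^{\otimes\ell n})$ so that one direction is the trivial subsequence inequality, then use monotonicity of $m\mapsto E_{d}^{\varepsilon}(\rho^{\otimes m})$ (discarding extra copies is LOCC) together with the linear growth bound of Lemma~\ref{lem:dim-bnd-ed} to show the arithmetic subsequence along multiples of $\ell$ captures the same $\liminf$ and $\limsup$ (the paper organizes exactly this via the residue classes $m=k\ell+j$), and finally take $\inf$/$\sup$ over $\varepsilon\in(0,1)$. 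The opening superadditivity sketch (running $k$ protocols and post-selecting on all successes, degrading $\varepsilon$ to $\varepsilon'$) is a detour you rightly abandon --- as written it would not even yield $E_{d}^{\varepsilon'}(\rho^{\otimes km})\geq k\,E_{d}^{\varepsilon}(\rho^{\otimes m})$, since the expected yield of that construction is $kp^{k}\log_{2}d$ rather than $kp\log_{2}d$ --- but nothing in your actual proof relies on it, and the one slightly muddled inequality ($E_{d}^{\varepsilon}(\rho^{\otimes m})\leq E_{d}^{\varepsilon}(\rho^{\otimes m'})+(m'-m)(\log_{2}D+\cdots)$, where the additive term is superfluous given monotonicity) is harmless because all that is really needed is monotonicity plus boundedness of $\frac{1}{m}E_{d}^{\varepsilon}(\rho^{\otimes m})$, both of which you have.
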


\begin{proof}
First consider that%
\begin{align}
\liminf_{n\rightarrow\infty}\frac{1}{n}E_{d}^{\varepsilon}(\rho^{\otimes n})
&  \leq\frac{1}{\ell}\liminf_{n\rightarrow\infty}\frac{1}{n}E_{d}%
^{\varepsilon}(\rho^{\otimes n\ell}),\\
\limsup_{n\rightarrow\infty}\frac{1}{n}E_{d}^{\varepsilon}(\rho^{\otimes n})
&  \geq\frac{1}{\ell}\limsup_{n\rightarrow\infty}\frac{1}{n}E_{d}%
^{\varepsilon}(\rho^{\otimes n\ell}),
\end{align}
because $\{\frac{1}{n\ell}E_{d}^{\varepsilon}(\rho^{\otimes n\ell}%
)\}_{n\in\mathbb{N}}$ is a subsequence of $\{\frac{1}{n}E_{d}^{\varepsilon
}(\rho^{\otimes n})\}_{n\in\mathbb{N}}$.

So it remains to prove the opposite inequalities. We first show that%
\begin{equation}
\limsup_{n\rightarrow\infty}\frac{1}{n}E_{d}^{\varepsilon}(\rho^{\otimes
n})\leq\frac{1}{\ell}\limsup_{n\rightarrow\infty}\frac{1}{n}E_{d}%
^{\varepsilon}(\rho^{\otimes n\ell}). \label{eq:tougher-up-bnd-ed}%
\end{equation}
To this end, we make use of an idea from the proof of Theorem~8 in ref.~\cite{TWW14}. For a fixed positive integer $\ell$, consider that%
\begin{multline}
\limsup_{n\rightarrow\infty}\frac{1}{n}E_{d}^{\varepsilon}(\rho^{\otimes
n})=\\
\max_{j\in\left\{  0,1,\ldots,\ell-1\right\}  }\limsup_{k\rightarrow\infty
}\frac{1}{k\ell+j}E_{d}^{\varepsilon}(\rho^{\otimes\left(  k\ell+j\right)  }).
\label{eq:lim-sup-rewrite}%
\end{multline}
Consider now a fixed $j\in\left\{  0,1,\ldots,\ell-1\right\}  $. Then we find
that%
\begin{align}
&  \frac{1}{k\ell+j}E_{d}^{\varepsilon}(\rho^{\otimes\left(  k\ell+j\right)
})\nonumber\\
&  \leq\frac{1}{k\ell+j}E_{d}^{\varepsilon}(\rho^{\otimes\left(  \left(
k+1\right)  \ell\right)  })\nonumber\\
&  =\frac{1}{\left(  k+1\right)  \ell}E_{d}^{\varepsilon}(\rho^{\otimes\left(
\left(  k+1\right)  \ell\right)  })\nonumber\\
&  \qquad+\frac{\ell-j}{\left(  k+1\right)  \ell\left(  k\ell+j\right)  }%
E_{d}^{\varepsilon}(\rho^{\otimes\left(  \left(  k+1\right)  \ell\right)  }).
\end{align}
We conclude that%
\begin{equation*}
\limsup_{k\rightarrow\infty}\frac{\ell-j}{\left(  k+1\right)  \ell\left(
k\ell+j\right)  }E_{d}^{\varepsilon}(\rho^{\otimes\left(  \left(  k+1\right)
\ell\right)  })=0
\end{equation*}
because, from Lemma~\ref{lem:dim-bnd-ed},%
\begin{equation*}
0\leq\frac{\ell-j}{\left(  k+1\right)  \ell\left(  k\ell+j\right)  }%
E_{d}^{\varepsilon}(\rho^{\otimes\left(  \left(  k+1\right)  \ell\right)  }),
\end{equation*}
and%
\begin{align*}
&  \frac{\ell-j}{\left(  k+1\right)  \ell\left(  k\ell+j\right)  }%
E_{d}^{\varepsilon}(\rho^{\otimes\left(  \left(  k+1\right)  \ell\right)
})\nonumber\\
&  \leq\frac{\ell-j}{\left(  k+1\right)  \ell\left(  k\ell+j\right)  }\left[
\left(  \left(  k+1\right)  \ell\right)  \log_{2}D+\log_{2}\!\left(  \frac
{1}{1-\varepsilon}\right)  \right] \nonumber\\
&  =\frac{\ell-j}{k\ell+j}\left[  \log_{2}D+\frac{1}{\left(  k+1\right)  \ell
}\log_{2}\!\left(  \frac{1}{1-\varepsilon}\right)  \right]  ,
\end{align*}
so that%
\begin{align*}
&  \limsup_{k\rightarrow\infty}\frac{\ell-j}{\left(  k+1\right)  \ell\left(
k\ell+j\right)  }E_{d}^{\varepsilon}(\rho^{\otimes\left(  \left(  k+1\right)
\ell\right)  })\nonumber\\
&  \leq\limsup_{k\rightarrow\infty}\frac{\ell-j}{k\ell+j}\left[  \log
_{2}D+\frac{1}{\left(  k+1\right)  \ell}\log_{2}\!\left(  \frac{1}%
{1-\varepsilon}\right)  \right] \nonumber\\
&  =0.
\end{align*}
Thus, we find that%
\begin{align}
&  \limsup_{k\rightarrow\infty}\frac{1}{k\ell+j}E_{d}^{\varepsilon}%
(\rho^{\otimes\left(  k\ell+j\right)  })\nonumber\\
&  \leq\limsup_{k\rightarrow\infty}\frac{1}{\left(  k+1\right)  \ell}%
E_{d}^{\varepsilon}(\rho^{\otimes\left(  \left(  k+1\right)  \ell\right)
})\nonumber\\
&  =\frac{1}{\ell}\limsup_{k\rightarrow\infty}\frac{1}{k+1}E_{d}^{\varepsilon
}(\rho^{\otimes\left(  \left(  k+1\right)  \ell\right)  }).
\end{align}
Since this upper bound is independent of $j$, we combine with
\eqref{eq:lim-sup-rewrite} to arrive at the desired inequality in
\eqref{eq:tougher-up-bnd-ed}. The inequality
\begin{equation*}
\liminf_{n\rightarrow\infty}\frac{1}{n}E_{d}^{\varepsilon}(\rho^{\otimes
n})\geq\frac{1}{\ell}\liminf_{n\rightarrow\infty}\frac{1}{n}E_{d}%
^{\varepsilon}(\rho^{\otimes n\ell})
\end{equation*}
can be proved in a similar fashion, by using the fact that%
\begin{multline}
\liminf_{n\rightarrow\infty}\frac{1}{n}E_{d}^{\varepsilon}(\rho^{\otimes
n})=\\
\min_{j\in\left\{  0,1,\ldots,\ell-1\right\}  }\liminf_{k\rightarrow\infty
}\frac{1}{k\ell+j}E_{d}^{\varepsilon}(\rho^{\otimes\left(  k\ell+j\right)  }).
\end{multline}
Thus we conclude the desired equalities in \eqref{eq:lim-inf-ED-reg}--\eqref{eq:lim-sup-ED-reg}.
Since the equalities in \eqref{eq:lim-inf-ED-reg}--\eqref{eq:lim-sup-ED-reg}
hold for all $\varepsilon\in\left(  0,1\right)  $, we conclude
\eqref{eq:PADE-reg}--\eqref{eq:sc-PADE-reg}\ after applying the definitions in
\eqref{eq:asymp-PADE}--\eqref{eq:strong-conv-PADE}, respectively.
\end{proof}

\begin{proof}
[Proof of Corollary~\ref{cor:reg-rains-bnd}]Apply
\eqref{eq:sc-PADE-reg} and then Theorem~\ref{thm:strong-conv-Rains}%
\ to the state $\rho^{\otimes\ell}$ to conclude that%
\begin{equation*}
\widetilde{E}_{d}(\rho)=\frac{1}{\ell}\widetilde{E}_{d}(\rho^{\otimes\ell
})\leq\frac{1}{\ell}R(\rho^{\otimes\ell}).
\end{equation*}
Since this upper bound holds for every $\ell\in\mathbb{N}$, the infimum over
$\ell\in\mathbb{N}$ is an upper bound as well.
\end{proof}

\section{Weak Converse Bound on PADE}

\label{sec:weak-conv-bnd}

In this appendix, we establish a bound on the non-asymptotic PADE, which can
be used to arrive at a weak converse bound on the asymptotic PADE.

\subsection{Upper bound on non-asymptotic PADE}

\begin{theorem}[Single-shot weak converse]
\label{thm:weak-converse-1-shot} For $\varepsilon\in\lbrack0,1/2]$ and
$\rho\in\mathcal{S}(\mathcal{H}_{AB})$, the following upper bound holds,%
\begin{equation}
E_{d}^{\varepsilon}(\rho)\leq\frac{1}{1-\varepsilon}\left[  R(\rho
)+h_{2}(\varepsilon)\right]  . \label{eq:ed-up-bnd-weak}%
\end{equation}

\end{theorem}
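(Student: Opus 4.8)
The plan is to mimic the structure of the proof of Theorem~\ref{thm:non-asymp-ED-upper-bnd}, but working directly with the (non-R\'enyi) Rains relative entropy $R$ and exploiting its convexity and a continuity-type bound on $R$ evaluated at an approximate maximally entangled state, in order to get the sharper weak-converse constant $1/(1-\varepsilon)$ with the $h_2(\varepsilon)$ correction rather than the R\'enyi-style multiplicative constant. First I would fix an arbitrary non-asymptotic PADE protocol, i.e., an LOCC channel $\mathcal{L}\in\operatorname{LOCC}(\mathcal{H}_{AB},\mathcal{H}_{X\hat A\hat B})$ and $p\in[0,1]$ with
\begin{equation*}
\mathcal{L}(\rho)=p\,|1\rangle\!\langle 1|\otimes\widetilde{\Phi}^{d}+(1-p)\,|0\rangle\!\langle 0|\otimes\sigma,\qquad F(\widetilde{\Phi}^{d},\Phi^{d})\geq 1-\varepsilon .
\end{equation*}
Applying Theorem~\ref{thm:main} together with Corollary~\ref{rem:selective-PPT-actual-rel-ents} to the channel $\mathcal{L}$ and the state $\rho$ (an LOCC channel being a special selective PPT operation with a single outcome, or alternatively viewing the classical register $X$ as the flag), and then using the flags property from Corollary~\ref{cor:inv-class-comm} on the $A|XB$ or $XA|B$ cut of $\mathcal{L}(\rho)$, I would obtain
\begin{equation*}
R(\rho)\ \geq\ R(\mathcal{L}(\rho))\ =\ p\,R(\widetilde{\Phi}^{d})+(1-p)\,R(\sigma)\ \geq\ p\,R(\widetilde{\Phi}^{d}),
\end{equation*}
where the last step uses $R(\sigma)\geq 0$.

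The heart of the argument is then a lower bound on $R(\widetilde{\Phi}^{d})$ in terms of $\log_2 d$ given only the fidelity guarantee $F(\widetilde{\Phi}^{d},\Phi^{d})\geq 1-\varepsilon$. I expect this to be the main obstacle, since $R$ (unlike $\widetilde{R}_\alpha$ for $\alpha>1$) does not enjoy the clean multiplicative continuity bound of Lemma~\ref{lem:pseudo-cont-sandwiched-renyi}; instead one must pay an additive $h_2(\varepsilon)$ term. The plan is to use the standard approach: apply the measurement channel $\mathcal{M}(\cdot)=\operatorname{Tr}[\Phi^{d}(\cdot)]|1\rangle\!\langle 1|+\operatorname{Tr}[(I-\Phi^{d})(\cdot)]|0\rangle\!\langle 0|$ and the data-processing inequality for $D$, but now combine it with the exact expression for the binary classical relative entropy and the Fannes--Audenaert-type estimate. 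Concretely, for $\sigma\in\operatorname{PPT}^{\prime}(\mathcal{H}_{\hat A\hat B})$ one has $\operatorname{Tr}[\Phi^{d}\sigma]\leq 1/d$ (from \eqref{eq:ppt'-bound-ent-test}), $\operatorname{Tr}[\sigma]\leq d$, and $\operatorname{Tr}[\Phi^{d}\widetilde{\Phi}^{d}]=F(\widetilde{\Phi}^{d},\Phi^{d})\geq 1-\varepsilon$, so
\begin{equation*}
D(\widetilde{\Phi}^{d}\Vert\sigma)\ \geq\ D\!\left((1-\varepsilon',\varepsilon')\,\middle\Vert\,(\operatorname{Tr}[\Phi^{d}\sigma],\operatorname{Tr}[(I-\Phi^{d})\sigma])\right)
\end{equation*}
for $\varepsilon'\coloneqq 1-\operatorname{Tr}[\Phi^{d}\widetilde{\Phi}^{d}]\leq\varepsilon$, and expanding the binary relative entropy gives a term $(1-\varepsilon')\log_2 d$ minus $h_2(\varepsilon')$ plus a nonnegative remainder; monotonicity of $x\mapsto (1-x)\log_2 d - h_2(x)$ on $[0,1/2]$ (which is why we assume $\varepsilon\in[0,1/2]$) then yields $R(\widetilde{\Phi}^{d})\geq (1-\varepsilon)\log_2 d - h_2(\varepsilon)$ after taking the infimum over $\sigma\in\operatorname{PPT}^{\prime}(\mathcal{H}_{\hat A\hat B})$.

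Finally I would assemble the pieces: from $R(\rho)\geq p\,R(\widetilde{\Phi}^{d})\geq p\big[(1-\varepsilon)\log_2 d - h_2(\varepsilon)\big]$ and $p\leq 1$, one gets
\begin{equation*}
R(\rho)\ \geq\ p(1-\varepsilon)\log_2 d - h_2(\varepsilon)\ \geq\ (1-\varepsilon)\,p\log_2 d - h_2(\varepsilon),
\end{equation*}
hence $p\log_2 d\leq \frac{1}{1-\varepsilon}\big[R(\rho)+h_2(\varepsilon)\big]$. Since the right-hand side depends only on $\rho$ and $\varepsilon$ and the bound holds for every admissible $\mathcal{L}$ and $p$, taking the supremum and invoking the definition of $E_{d}^{\varepsilon}(\rho)$ yields \eqref{eq:ed-up-bnd-weak}. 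One delicate point to check is the treatment of the degenerate case $p=0$ (where $\widetilde{\Phi}^{d}$ is not defined), which is handled trivially since then $p\log_2 d = 0$ and the bound is immediate; another is ensuring the binary-relative-entropy manipulation is valid when $\operatorname{Tr}[\sigma]<1$, which is fine because the only facts used are $\operatorname{Tr}[\Phi^{d}\sigma]\leq 1/d$ and positivity of the discarded terms.
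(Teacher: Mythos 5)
Your proposal is correct and follows essentially the same route as the paper: selective PPT (LOCC) monotonicity of $R$ via Theorem~\ref{thm:main} gives $R(\rho)\geq p\,R(\widetilde{\Phi}^{d})$, and the key lemma $R(\widetilde{\Phi}^{d})\geq(1-\varepsilon)\log_{2}d-h_{2}(\varepsilon)$ is proved exactly as you sketch, by data processing under the two-outcome measurement $\{\Phi^{d},I-\Phi^{d}\}$, the bound $\operatorname{Tr}[\Phi^{d}\sigma]\leq 1/d$, and monotonicity of $h_{2}$ on $[0,1/2]$. One small correction to your final remark: the nonnegativity of the discarded term $-\left(1-\operatorname{Tr}[\Phi^{d}\widetilde{\Phi}^{d}]\right)\log_{2}\!\left(\operatorname{Tr}[\sigma]-\operatorname{Tr}[\Phi^{d}\sigma]\right)$ requires $\operatorname{Tr}[\sigma]\leq 1$, which does hold for $\sigma\in\operatorname{PPT}^{\prime}(\mathcal{H}_{\hat{A}\hat{B}})$ since $\operatorname{Tr}[\sigma]=\operatorname{Tr}[(\operatorname{id}\otimes T)(\sigma)]\leq\left\Vert(\operatorname{id}\otimes T)(\sigma)\right\Vert_{1}\leq 1$; the weaker bound $\operatorname{Tr}[\sigma]\leq d$ that you cite would not suffice here.
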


\begin{proof}
Let $\mathcal{L}\in\operatorname{LOCC}(\mathcal{H}_{AB},\mathcal{H}_{X\hat
{A}\hat{B}})$ and $p\in\left[  0,1\right]  $ be the elements of an arbitrary
non-asymptotic PADE\ protocol. Consider that%
\begin{align}
R(\rho)  &  \geq pR(\widetilde{\Phi}^{d})+\left(  1-p\right)  R(\sigma
)\nonumber\\
&  \geq pR(\widetilde{\Phi}^{d})\nonumber\\
&  \geq p\left[  \left(  1-\varepsilon\right)  \log_{2}d-h_{2}(\varepsilon
)\right] \nonumber\\
&  \geq p\left(  1-\varepsilon\right)  \log_{2}d-h_{2}(\varepsilon),
\end{align}
where the first inequality follows from applying Theorem~\ref{thm:main} and
Corollary~\ref{rem:selective-PPT-actual-rel-ents}\ to $\mathcal{L}$ and $\rho
$. The second inequality follows because $R(\sigma)\geq0$, and the third
inequality follows from applying
Lemma~\ref{lem:rains-ineq-approx-max-ent-weak-conv} below. Rewriting the
inequality $R(\rho)\geq p\left(  1-\varepsilon\right)  \log_{2}d-h_{2}%
(\varepsilon)$, we find that
\begin{equation*}
p\log_{2}d\leq\frac{1}{1-\varepsilon}\left[  R(\rho)+h_{2}(\varepsilon
)\right]  .
\end{equation*}
Since the upper bound depends only on $\rho$ and $\varepsilon$ and it holds
for all $\mathcal{L}\in\operatorname{LOCC}(\mathcal{H}_{AB},\mathcal{H}%
_{X\hat{A}\hat{B}})$ and $p\in\left[  0,1\right]  $, we conclude
\eqref{eq:ed-up-bnd-weak}\ after applying the definition of $E_{d}%
^{\varepsilon}(\rho)$.
\end{proof}

\begin{lemma}[Weak convexity]
\label{lem:rains-ineq-approx-max-ent-weak-conv}Let $\varepsilon\in\left[
0,1/2\right]  $, $\widetilde{\Phi}^{d}\in\mathcal{S}(\mathcal{H}_{\hat{A}%
\hat{B}})$, and $\Phi^{d}\in\mathcal{S}(\mathcal{H}_{\hat{A}\hat{B}})$ as
defined in \eqref{eq:max-ent-def}. Suppose that $F(\widetilde{\Phi}^{d}%
,\Phi^{d})\geq1-\varepsilon$. Then%
\begin{equation*}
\left(  1-\varepsilon\right)  \log_{2}d\leq R(\widetilde{\Phi}^{d}%
)+h_{2}(\varepsilon).
\end{equation*}

\end{lemma}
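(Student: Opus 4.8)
The plan is to adapt the first proof of Lemma~\ref{lem:rains-ineq-approx-max-ent}, replacing the sandwiched R\'enyi quantity $\widetilde{D}_{\alpha}$ by the ordinary quantum relative entropy $D$; the R\'enyi correction term $\frac{\alpha}{\alpha-1}\log_2\frac{1}{1-\varepsilon}$ will then be superseded by the binary entropy $h_2(\varepsilon)$. Concretely, I would fix an arbitrary $\sigma\in\operatorname{PPT}^{\prime}(\mathcal{H}_{\hat{A}\hat{B}})$ (assuming $D(\widetilde{\Phi}^{d}\Vert\sigma)<\infty$, since otherwise there is nothing to prove for this $\sigma$) and apply the entanglement-testing measurement channel $\mathcal{M}(\cdot)\coloneqq\operatorname{Tr}[\Phi^{d}(\cdot)]\,|1\rangle\!\langle1|+\operatorname{Tr}[(I-\Phi^{d})(\cdot)]\,|0\rangle\!\langle0|$ together with the data-processing inequality \eqref{eq:DP} for $D$, obtaining $D(\widetilde{\Phi}^{d}\Vert\sigma)\geq D(\mathcal{M}(\widetilde{\Phi}^{d})\Vert\mathcal{M}(\sigma))$.

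Next I would evaluate the right-hand side as a classical relative entropy between $(f,1-f)$ and $(s,t)$, where $f\coloneqq\operatorname{Tr}[\Phi^{d}\widetilde{\Phi}^{d}]=F(\widetilde{\Phi}^{d},\Phi^{d})$ (using that $\Phi^{d}$ is pure), $s\coloneqq\operatorname{Tr}[\Phi^{d}\sigma]$, and $t\coloneqq\operatorname{Tr}[\sigma]-s$; thus $D(\mathcal{M}(\widetilde{\Phi}^{d})\Vert\mathcal{M}(\sigma))=f\log_2(f/s)+(1-f)\log_2((1-f)/t)$. For the first term I would invoke \eqref{eq:ppt'-bound-ent-test}, namely $s\leq 1/d$, so that $-f\log_2 s\geq f\log_2 d$; for the second term I would use $t\leq\operatorname{Tr}[\sigma]=\operatorname{Tr}[(\operatorname{id}\otimes T)(\sigma)]\leq\|(\operatorname{id}\otimes T)(\sigma)\|_1\leq1$, so that $-(1-f)\log_2 t\geq0$. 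Collecting terms gives $D(\widetilde{\Phi}^{d}\Vert\sigma)\geq f\log_2 d-h_2(f)$.

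Finally, since $f\in[1-\varepsilon,1]\subseteq[1/2,1]$ by the hypotheses $F(\widetilde{\Phi}^{d},\Phi^{d})\geq1-\varepsilon$ and $\varepsilon\leq1/2$, I would observe that $f\mapsto f\log_2 d-h_2(f)$ is non-decreasing on $[1/2,1]$ (its derivative there equals $\log_2 d+\log_2\frac{f}{1-f}\geq0$ when $d\geq1$), hence is at least its value at $f=1-\varepsilon$, which is $(1-\varepsilon)\log_2 d-h_2(1-\varepsilon)=(1-\varepsilon)\log_2 d-h_2(\varepsilon)$. Thus $D(\widetilde{\Phi}^{d}\Vert\sigma)\geq(1-\varepsilon)\log_2 d-h_2(\varepsilon)$ for every $\sigma\in\operatorname{PPT}^{\prime}(\mathcal{H}_{\hat{A}\hat{B}})$; taking the infimum over $\sigma$ and recalling $R(\widetilde{\Phi}^{d})=\inf_{\sigma}D(\widetilde{\Phi}^{d}\Vert\sigma)$, then rearranging, yields $(1-\varepsilon)\log_2 d\leq R(\widetilde{\Phi}^{d})+h_2(\varepsilon)$. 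I expect the only mildly delicate point to be the ``failure'' term $(1-f)\log_2((1-f)/t)$, which could a priori be negative; the fix is precisely the normalization bound $\operatorname{Tr}[\sigma]\leq1$ valid on $\operatorname{PPT}^{\prime}$, after which the estimate is elementary. (One could equivalently optimize over $\sigma\in\mathcal{S}(\mathcal{H}_{\hat{A}\hat{B}})$ and appeal to Lemma~\ref{lem:rains-identity}, but working directly in $\operatorname{PPT}^{\prime}$ is cleaner.)
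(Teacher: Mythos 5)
Your proof is correct and follows essentially the same route as the paper's: the entanglement-testing measurement $\mathcal{M}$, data processing for $D$, the bound $\operatorname{Tr}[\Phi^{d}\sigma]\leq 1/d$ from \eqref{eq:ppt'-bound-ent-test}, and the normalization $\operatorname{Tr}[\sigma]\leq1$ on $\operatorname{PPT}^{\prime}$ to discard the failure term. The only cosmetic difference is at the end, where you establish monotonicity of $f\mapsto f\log_{2}d-h_{2}(f)$ on $[1/2,1]$ in one stroke, whereas the paper bounds $h_{2}(1-f)\leq h_{2}(\varepsilon)$ and $f\log_{2}d\geq(1-\varepsilon)\log_{2}d$ separately; both are valid.
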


\begin{proof}
Let a measurement channel $\mathcal{M}\in\operatorname{CPTP}(\mathcal{H}%
_{\hat{A}\hat{B}},\mathcal{H}_{X})$ be defined as%
\begin{equation*}
\mathcal{M}(\cdot)\coloneqq\operatorname{Tr}[\Phi^{d}(\cdot)]|1\rangle
\!\langle1|+\operatorname{Tr}[(I-\Phi^{d})(\cdot)]|0\rangle\!\langle0|.
\end{equation*}
Recall that $F(\widetilde{\Phi}^{d},\Phi^{d})=\operatorname{Tr}[\Phi
^{d}\widetilde{\Phi}^{d}]$ because $\Phi^{d}$ is pure. Then, for $\sigma
\in\operatorname{PPT}^{\prime}(\mathcal{H}_{\hat{A}\hat{B}})$, we find that
\begin{align}
D(\widetilde{\Phi}^{d}\Vert\sigma)  &  \geq D(\mathcal{M}(\widetilde{\Phi}%
^{d})\Vert\mathcal{M}(\sigma))\nonumber\\
&  =\operatorname{Tr}[\Phi^{d}\widetilde{\Phi}^{d}]\log_{2}\!\left(
\frac{\operatorname{Tr}[\Phi^{d}\widetilde{\Phi}^{d}]}{\operatorname{Tr}%
[\Phi^{d}\sigma]}\right) \nonumber\\
&  \qquad+\left(  1-\operatorname{Tr}[\Phi^{d}\widetilde{\Phi}^{d}]\right)
\log_{2}\!\left(  \frac{1-\operatorname{Tr}[\Phi^{d}\widetilde{\Phi}^{d}%
]}{\operatorname{Tr}[\sigma]-\operatorname{Tr}[\Phi^{d}\sigma]}\right)
\nonumber\\
&  =-h_{2}(1-\operatorname{Tr}[\Phi^{d}\widetilde{\Phi}^{d}%
])-\operatorname{Tr}[\Phi^{d}\widetilde{\Phi}^{d}]\log_{2}(\operatorname{Tr}%
[\Phi^{d}\sigma])\nonumber\\
&  \qquad-\left(  1-\operatorname{Tr}[\Phi^{d}\widetilde{\Phi}^{d}]\right)
\log_{2}(\operatorname{Tr}[\sigma]-\operatorname{Tr}[\Phi^{d}\sigma
])\nonumber\\
&  \geq-h_{2}(\varepsilon)+(1-\varepsilon)
\log_{2}d.
\end{align}
The first inequality follows from the data-processing inequality for $D$. The second inequality follows because%
\begin{equation*}
-\left(  1-\operatorname{Tr}[\Phi^{d}\widetilde{\Phi}^{d}]\right)  \log
_{2}(\operatorname{Tr}[\sigma]-\operatorname{Tr}[\Phi^{d}\sigma])\geq0.
\end{equation*}
Also, we have applied the assumption that $1-\operatorname{Tr}[\Phi^{d}%
\widetilde{\Phi}^{d}]\leq\varepsilon$ and that the binary entropy function is
monotone increasing on the interval $\left[  0,1/2\right]  $. Additionally, we
applied \eqref{eq:ppt'-bound-ent-test}. Since the inequality $D(\widetilde
{\Phi}^{d}\Vert\sigma)\geq-h_{2}(\varepsilon)+(1-\varepsilon)\log_{2}d$ holds
for all $\sigma\in\operatorname{PPT}^{\prime}(\mathcal{H}_{\hat{A}\hat{B}})$,
we conclude the desired inequality after taking the infimum over $\sigma
\in\operatorname{PPT}^{\prime}(\mathcal{H}_{\hat{A}\hat{B}})$.
\end{proof}

\subsection{Weak-converse upper bound on asymptotic PADE}

Here we show that the asymptotic PADE is bounded from above by the Rains
relative entropy as
\begin{equation*}
E_{d}(\rho)\leq R(\rho).
\end{equation*}
This inequality actually follows directly from
Theorem~\ref{thm:strong-conv-Rains} and the fact that $E_{d}(\rho)
\leq\widetilde{E}_{d}(\rho)$, but here we see it as a direct consequence of
Theorem~\ref{thm:weak-converse-1-shot}. Indeed, consider that
\begin{align}
E_{d}(\rho)  &  = \inf_{\varepsilon\in(0,1)} \liminf_{n \to\infty}\frac{1}{n}
E_{d}^{\varepsilon}(\rho^{\otimes n})\nonumber\\
&  \leq\inf_{\varepsilon\in(0,1)} \liminf_{n \to\infty} \left(  \frac
{1}{n(1-\varepsilon)}\left[  R(\rho^{\otimes n} )+h_{2}(\varepsilon)\right]
\right) \nonumber\\
&  \leq\inf_{\varepsilon\in(0,1)} \liminf_{n \to\infty} \left(  \frac
{1}{1-\varepsilon}\left[  R(\rho)+\frac{h_{2}(\varepsilon)}{n}\right]  \right)
\nonumber\\
&  = \inf_{\varepsilon\in(0,1)} \frac{1}{1-\varepsilon} R(\rho)\nonumber\\
&  = R(\rho).
\end{align}
For the first inequality, we
have applied Theorem~\ref{thm:weak-converse-1-shot}
and for the second, we
have made use of the subadditivity of the Rains relative entropy.
Employing the same kind of argument from Appendix~\ref{sec:reg-Rains-up-bnd},
we arrive at 
\begin{equation*}
E_{d}(\rho) \leq\inf_{\ell\in\mathbb{N} } R(\rho^{\otimes\ell})
\end{equation*}
as a regularized Rains relative entropy bound.

\end{document}